\newcommand{\R}{\mathbb{R}}
\newcommand{\C}{\mathbb{C}}
\newcommand{\N}{\mathbb{N}}
\newcommand{\Z}{\mathbb{Z}}
\newcommand{\mrm}[1]{\mathrm{#1}}
\newcommand{\ol}[1]{\overline{#1}}
\newcommand{\co}{\colon}
\newcommand{\vrt}{\,\vert\,}
\newcommand{\lto}{\rightarrow}
\newcommand{\lmap}{\mapsto}
\newcommand{\abs}[1]{\lvert#1\rvert}
\newcommand{\norm}[1]{\lVert#1\rVert}
\newcommand{\img}{\mrm{i}}
\newcommand{\e}{\mrm{e}}
\newcommand{\ie}{\textit{i.e.}\;}
\newcommand{\eg}{\textit{e.g.}\;}
\newcommand{\Hs}{\mathfrak{h}}
\newcommand{\Hl}{\mathfrak{H}}
\renewcommand{\H}{\mathcal{H}}
\newcommand{\He}{\mathbb{H}}
\newcommand{\Uso}{U^{\mrm{so}}}
\newcommand{\UB}{U^{\mrm{B}}}
\newcommand{\cgc}[6]{\begin{bmatrix}
#1 & #2 & #3 \\ #4 & #5 & #6\end{bmatrix}}
\newcommand{\K}{\mathfrak{K}}
\newcommand{\bas}{\mathcal{S}}
\newcommand{\id}{\mathds{1}}
\newcommand{\setm}{\smallsetminus}
\newcommand{\D}{\mathcal{D}}
\newcommand{\B}{\mathbf{B}}
\newcommand{\Ds}{\mathscr{D}}
\newcommand{\F}{\mathscr{F}}
\newcommand{\ot}{\otimes}
\newcommand{\op}{\oplus}
\newcommand{\pd}{\partial}
\newcommand{\dd}{\,\mrm{d}}
\DeclareMathOperator{\dom}{dom}
\DeclareMathOperator{\res}{res}
\DeclareMathOperator*{\LIM}{l.i.m.}
\DeclareMathOperator*{\slim}{\mathrm{s-}\!\lim}
\theoremstyle{plain}
\newtheorem{thm}{Theorem}[section]
\newtheorem{lem}[thm]{Lemma}
\newtheorem{prop}[thm]{Proposition}
\newtheorem{cor}[thm]{Corollary}
\theoremstyle{definition}
\newtheorem{rem}[thm]{Remark}
\numberwithin{equation}{section}
\begin{document}
\title[Computation of unitary group for the Rashba operator]{Computation of 
		unitary group for the Rashba spin-orbit 
		coupled operator, with application to point-interactions}
\author{Rytis Jur\v s\. enas}
\address{Vilnius University, Center for Physical Sciences and Technology,   
		Institute of Theoretical Physics and Astronomy,  
		Saul\.{e}tekio ave.~3, Vilnius 10222, Lithuania}
\email{Rytis.Jursenas@tfai.vu.lt}
\subjclass[2010]{47B25, 81Q15, 47N50}
\date{\today}
\keywords{Rashba spin-orbit coupling,  
		  one-parameter unitary group,  
		  supersingular perturbation,
		  cold molecule.}
\begin{abstract}
     We compute an explicit formula for the one-parameter unitary 
     group of the single-particle Rashba spin-orbit coupled operator 
     in dimension three. As an application, we derive the formula for 
     the Green function for the two-particle operator, and then prove 
     that the spin-dependent point-interaction is of class $\H_{-4}$. 
     The latter is thus the example of a supersingular perturbation 
     for which no self-adjoint operator can be constructed.
\end{abstract}
\maketitle
\section{Introduction}
The fundamental object for describing quantum dynamics of a 
system governed by a Hamiltonian $h$ is the associated 
one-parameter unitary group $\e^{\img t h}$ 
($t\in\R$). For $h$ lower semibounded, the unitary group is 
closely related to the semigroup $\e^{-th}$ ($t>0$) in that 
$\e^{-\img t h}$ is the strong limit of 
$\e^{-\img(t-\img\epsilon)h}$ as $\epsilon\searrow0$. For example, 
the integral kernel (the free propagator) of the Schr\"{o}dinger 
semigroup $\e^{t\Delta\vert_{H^2(\R^3)}}$ is well-known:
\begin{equation}
K_t^0(x)=\frac{\e^{-\frac{\abs{x}^2}{4t}}}{(4\pi t)^{3/2}}
\quad \text{a.e. $x\in\R^3$}
\label{eq:1}
\end{equation}
and
\begin{equation}
\e^{\img t\Delta\vert_{H^2(\R^3)}}u
=\LIM\int K_{\img t}^0(\cdot-y)u(y)\dd y
\label{eq:2}
\end{equation}
for $u\in L^2(\R^3)$. Here and elsewhere 
$\LIM\int\equiv\LIM_{R\nearrow\infty}\int_{\abs{y}\leq R}$ 
implies the $L^2$-norm convergent integral; the integral is just the 
convolution $K^0_{\img t}*u$ when $u\in L^2\cap L^1$. 
For a comprehensive exposition of Schr\"{o}dinger (semi)groups 
the reader may refer to \cite{Simon-2,Yosida,Simon-1,Simon-3}.

In the first half of the present paper we obtain the formulas 
(theorem~\ref{thm:lE}, corollary~\ref{cor:lE}) analogous to 
\eqref{eq:1} and \eqref{eq:2} when $h$ is the Rashba spin-orbit coupled 
operator considered in the presence of the out-of-plane magnetic field. 
To the best of our knowledge, no such formula has been derived previously. 
More specifically, we consider the operator $h$ in 
$\Hl=\Hs\ot\C^2$, with $\Hs=L^2(\R^3)$, given by the operator sum 
\begin{subequations}\label{eq:h}
\begin{equation}
h=h^0+U\,, \quad h^0=-\Delta\vert_{H^2(\R^3)}\ot\id_{\C^2}
\end{equation}
where the potential $U$ (atom-light coupling) is the operator sum
\begin{equation}
U=\alpha\Uso+\beta\UB\,, \quad \alpha,\,\beta\geq0
\end{equation}
of the Rashba spin-orbit coupling term
\begin{equation}
\Uso=\ol{D^-\ot S^+-D^+\ot S^-}\,, \quad 
D^\pm=(\nabla_1\pm \img\nabla_2)\vert_{H^1(\R^3)}
\end{equation}
(the overbar denotes the closure) and the Raman-coupling term 
\begin{equation}
\UB=2\id_\Hs\ot S^3\,.
\end{equation}
\end{subequations}
The spin operators 
\begin{equation}
S^i=\frac{1}{2}\sigma^i \quad\text{for}\quad i\in\{1,2,3\}\,; \quad
S^\pm=S^1\pm \img S^2
\label{eq:spins}
\end{equation}
where $(\sigma^i)$ are standard Pauli matrices.

By using a classic Nelson theorem for analytic vectors we show that 
$D^-\ot S^+-D^+\ot S^-$ is essentially self-adjoint and that the 
linear span $\D$ of bounded elementary tensors $u\ot\varphi$, with 
$u\in\D_b(D^+D^-)$ and $\varphi\in\C^2$, is the core for its closure 
$\Uso$. As a result, the operator $h$ defined by \eqref{eq:h}, 
\eqref{eq:spins} is self-adjoint. Of course, to show the 
self-adjointness of $h$ it would suffice to notice that $h^0$ 
is self-adjoint and that $U$ is relatively $h^0$-bounded. However, 
it is $\D$ itself that is more important to our investigation 
besides the self-adjointness.

It is known that $h\geq-\Sigma$ is lower semibounded with 
$\Sigma=\beta$ if $2\beta>\alpha^2$ and 
$\Sigma=(\beta/\alpha)^2+(\alpha/2)^2$ otherwise. 
Variant forms of $h$ were studied by many authors. For example, 
the special case $\alpha=0$ in various spatial dimensions can be 
found in \cite{Carlone11,Cacciapuoti09,Cacciapuoti07,Bruning07,Exner07}. 
A general case in three spatial dimensions is studied in 
\cite{Jursenas16,Jursenas14}. 

In the second half of the paper we use the computed unitary group 
for deriving the Green function for the two-particle operator
\begin{equation}
H=h\ot \id_\Hl+\id_\Hl\ot h\,.
\label{eq:H0}
\end{equation}
Since $h\geq-\Sigma$ is self-adjoint in $\Hl$, 
$H\geq-2\Sigma$ is self-adjoint in $\Hl\ot\Hl$. We use the integral 
representation of the resolvent $R(z)=(H-z\id_{\Hl\ot\Hl})^{-1}$,
\begin{equation}
R(z)=\pm\img\int_0^\infty\e^{\pm\img tz}\e^{\mp\img tH}\dd t\,, \quad
\e^{\img tH}=\e^{\img th}\ot\e^{\img th}
\label{eq:resT-2}
\end{equation}
where the upper (lower) sign is taken when $\Im z>0$ ($\Im z<0$). 
For $\alpha$ small, we give explicit formulas for the parts of 
Green function in propositions~\ref{prop:RcAlpha0}, 
\ref{prop:RcDiagAlphaSmall}, and \ref{prop:RcNonDiagAlphaSmall}.

Our main motive for considering \eqref{eq:H0} is an attempt to 
understand eventually in a rigorous way the formation of cold 
molecules \cite{Fu}, provided that the interaction is zero-range. 
The main difficulty is that the two-particle operator $H$ does not admit 
the separation of variables in the center-of-mass coordinate system 
$Q=(x,X)$ unless $\alpha=0$; here $x$ is the relative coordinate and 
$X$ is the center-of-mass coordinate. The situation is clearly seen from 
the operator which is unitarily equivalent to $H$:
\begin{equation}
A\ot \id_\Hl+\id_\Hl\ot B+\alpha \ol{D}
\label{eq:Op-ABD}
\end{equation}
where the self-adjoint operators $A=A(x)$ and $B=B(X)$ are defined by
\begin{subequations}\label{eq:ABD}
\begin{equation}
A=2h^0+U\,, \quad B=\frac{1}{2}(h^0+U+\beta\UB)
\end{equation}
and the essentially self-adjoint operator $D=D(x,X)$ is defined by 
\begin{align}
D=&(D^+\ot \id_{\C^2})\ot(\id_\Hs\ot S^-)
-(D^-\ot \id_{\C^2})\ot(\id_\Hs\ot S^+) 
\nonumber \\
&+\frac{1}{2}[(\id_\Hs\ot S^+)\ot(D^-\ot \id_{\C^2})
-(\id_\Hs\ot S^-)\ot(D^+\ot \id_{\C^2})]\,.
\end{align}
\end{subequations}
When $\alpha=0$, \eqref{eq:Op-ABD} simplifies so that one 
can apply the theory of singular perturbations developed 
in \cite{Albeverio00} (and in particular in theorem~5.2.1 therein), 
since it was shown in \cite{Jursenas16} that the Dirac delta is of 
class $\H_{-2}(h)\setm\H_{-1}(h)$ for $h$, and hence for 
$A=2h^0+\beta\UB$; as usual, $(\H_n)_{n\in\Z}$ is the scale of 
Hilbert spaces associated with a self-adjoint operator. We assume 
that the two particles at positions $x_1$ and $x_2$ are interacting 
via the zero-range potential which is modeled by the Dirac distribution 
concentrated at $x=x_1-x_2=0$.

When $\alpha>0$, one cannot associate the $x$-dependent 
interaction potential to $A(x)$ alone because of $D(x,X)$; 
that is, the two-particle case no longer reduces to the 
single-particle one. Instead, one studies the restriction of 
\eqref{eq:Op-ABD} to $(H^2(\R^3\setm\{0\})\ot\C^2)\ot\dom h$. 
Equivalently, one considers the singular perturbation concentrated 
at $(0,X)$ and associated to the total operator \eqref{eq:Op-ABD}. 
Further results in this direction will be provided elsewhere. 
Here, we aim at considering the perturbation itself, and we show 
by using \eqref{eq:resT-2} that 
the perturbation is of class $\H_{-4}(H)\setm\H_{-3}(H)$; 
see theorem~\ref{thm:super}. 
A general theory of supersingular rank one perturbations is developed 
in \cite{Kurasov2,Kurasov} (see also the list of references therein), 
where it is shown that the restricted operator does not have 
self-adjoint extensions, but the so-called regular ones. The results 
for finite rank perturbations, which is our case, are generalized 
naturally.

One can also obtain the Green function for \eqref{eq:H0} by using 
the single-particle Green function in \cite{Jursenas14}. 
Mimicking the proof of theorem~5 in \cite{Simon-4}, 
for $z\in\C$ and $\Re z<-2\Sigma$, the resolvent $R(z)$ of $H$ is 
the norm convergent integral 
$\int_\R r(z/2-\img t)\ot r(z/2+\img t)\dd t/(2\pi)$, where $r(\cdot)$ 
is the resolvent of $h$. However, in this case we are restricted to 
$\Re z<-2\Sigma$, while considering singular perturbations we deal with 
$R(\pm\img)$. Even if one shows that one can analytically relax the 
restriction, the exposition becomes highly complicated due to the 
hypergeometric origin of functions $G_1$ and $G_2$ in 
\cite{Jursenas14}. On the other hand, we note in remark~\ref{rem:rem1}
without proof how the unitary group computed in the present paper 
relates to $G_1$ and $G_2$.
\section{Preliminaries}
Here and elsewhere: 
\begin{itemize}
\item
$\nabla_i$, $i\in\{1,2,3\}$, is the gradient in the $i$th 
component of a three-dimensional position vector; the Laplacian 
$\Delta=\sum_{i=1}^3\nabla_i^2$. 
\item 
$\Hl=\Hs\ot\C^2$, $\Hs=L^2(\R^3)$, $\Hs^c=L^2(\R^6)$,
$H^m$ ($m\in\N$) is the ($L^2$-)Sobolev space.
\item 
The tensor product $A\ot B=\ol{A\odot B}$ of the operators 
$A$ and $B$ in Hilbert spaces $\Hl_1$ and $\Hl_2$, respectively, 
is the closure of the operator $A\odot B$ defined on the linear 
space spanned by elementary tensors $u\ot v$ (conjugate-bilinear 
forms on $\Hl_1\times\Hl_2$), with $u\in\dom A$ and $v\in\dom B$: 
$(A\odot B)(u\ot v)=Au\ot Bv$. The tensor product $\Hl_1\ot\Hl_2$ 
of Hilbert spaces $\Hl_1$ and $\Hl_2$ is the completion of 
$\Hl_1\odot\Hl_2$ with respect to the cross norm, 
where $\Hl_1\odot\Hl_2$ is the linear space spanned by elementary 
tensors $u\ot v$, with $u\in\Hl_1$ and $v\in\Hl_2$.
\end{itemize}
Denote the standard basis of $\C^2$ by 
\[\ket{\frac{1}{2}}=\begin{pmatrix}1 \\ 0\end{pmatrix}, 
\quad
\ket{-\frac{1}{2}}=\begin{pmatrix}0 \\ 1\end{pmatrix}\,.\]
With this notation
\begin{equation}
S^\pm\ket{s}=\delta_{s,\mp\frac{1}{2}}\ket{\pm\frac{1}{2}}\,, 
\quad
S^3\ket{s}=s\ket{s}
\label{eq:Saction}
\end{equation}
for $s\in\{-\frac{1}{2},\frac{1}{2}\}$; $\delta_{s,\mp\frac{1}{2}}$ 
is the Kronecker symbol. The spin operators $S^\pm$ are bounded 
in $\C^2$, with the adjoint ones $(S^{\pm})^*=S^\mp$. 

By Gauss formula, the adjoint operators $(D^{\pm})^*=-D^\mp$ 
are densely defined in $\Hs$ and hence the $D^\pm=\ol{D^\pm}$ 
are closed. We therefore have that 
$(D^\mp\ot S^\pm)^*=-D^\pm\ot S^\mp$. By definition, 
$\dom(D^\mp\odot S^\pm)$ is dense in $\Hl$, hence so is 
$\dom(D^\mp\ot S^\pm)$. Thus we have
\begin{align*}
(D^-\ot S^+-D^+\ot S^-)^*\supseteq&(D^-\ot S^+)^*-(D^+\ot S^-)^* \\
=&D^-\ot S^+-D^+\ot S^-
\end{align*}
\ie $D^-\ot S^+-D^+\ot S^-$ is densely defined, symmetric, closable. 
In fact, a stronger property holds; see proposition~\ref{prop:ess2}.
\begin{prop}\label{prop:ess}
Let $\D_b(D^+D^-)$ be the set of bounded vectors $u$ for 
a self-adjoint operator $D^+D^-$; that is, 
$\norm{(D^+D^-)^nu}_\Hs\leq C_u^n$ for some $C_u>0$ and for 
$n\in\N$. Let $\D$ be the set spanned by elementary tensors 
$u\ot\varphi$, where $u\in\D_b(D^+D^-)$ and $\varphi\in\C^2$. 
Then $\D\subseteq\Hl$ densely.
\end{prop}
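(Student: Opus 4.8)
The plan is to reduce the claim to two standard facts: that the bounded vectors of any self-adjoint operator form a dense set, and that density passes through the finite-dimensional spin factor $\C^2$. The one point genuinely requiring care is that $D^+D^-$ is self-adjoint, so that its spectral calculus is available in the first place. Since the closed, densely defined operators $D^\pm=\ol{D^\pm}$ satisfy $(D^+)^*=-D^-$ (recorded above), von Neumann's theorem applied to $T=D^+$ shows that $TT^*=D^+(D^+)^*=-D^+D^-$ is self-adjoint; hence $D^+D^-$ is self-adjoint. Concretely, under the Fourier transform $\F$ on $\Hs$ the operators $D^\pm$ become multiplication by $\img\xi_1\mp\xi_2$, so that $D^+D^-$ is multiplication by the real symbol $-(\xi_1^2+\xi_2^2)$; this both confirms self-adjointness and makes the spectral measure $E$ of $D^+D^-$ explicit.

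Next I would show that $\D_b(D^+D^-)$ is dense in $\Hs$. For arbitrary $u\in\Hs$ and $R>0$ set $u_R=E([-R,R])u$. The spectral theorem gives
\[
\norm{(D^+D^-)^nu_R}_\Hs^2
=\int_{[-R,R]}\lambda^{2n}\dd\langle E_\lambda u,u\rangle
\leq R^{2n}\norm{u}_\Hs^2\,,
\]
so that $\norm{(D^+D^-)^nu_R}_\Hs\leq R^n\norm{u}_\Hs$ for every $n\in\N$; thus $u_R\in\D_b(D^+D^-)$ with constant $C_{u_R}=R$. Because $E([-R,R])\to\id_\Hs$ strongly as $R\nearrow\infty$, we have $u_R\to u$ in $\Hs$, whence $\D_b(D^+D^-)$ is dense.

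Finally I would transfer this density to $\Hl=\Hs\ot\C^2$. The orthonormal basis $\{\ket{\frac{1}{2}},\ket{-\frac{1}{2}}\}$ identifies $\Hl$ with $\Hs\op\Hs$, under which a general $\psi\in\Hl$ is written $\psi=u_1\ot\ket{\frac{1}{2}}+u_2\ot\ket{-\frac{1}{2}}$ with $u_1,u_2\in\Hs$. By construction $\D$ contains every vector $w_1\ot\ket{\frac{1}{2}}+w_2\ot\ket{-\frac{1}{2}}$ with $w_1,w_2\in\D_b(D^+D^-)$, and choosing $w_j\to u_j$ in $\Hs$ by the previous step yields convergence to $\psi$ in $\Hl$. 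This proves $\D\subseteq\Hl$ densely. The argument is essentially routine once self-adjointness of $D^+D^-$ is in hand; the only mild subtlety is checking that the operator-theoretic product $D^+D^-$ coincides with the Fourier multiplier (equivalently, with $-D^+(D^+)^*$) on its maximal domain, which is exactly what von Neumann's theorem secures.
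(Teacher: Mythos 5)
Your proof is correct, and its skeleton --- self-adjointness of $D^+D^-$, density of the bounded vectors in $\Hs$, then componentwise transfer to $\Hl=\Hs\ot\C^2$ --- is exactly the paper's. The difference lies in how the one substantive step is handled. The paper disposes of self-adjointness by ``applying Gauss formula twice'' (an integration-by-parts argument) and then cites a ``standard argument'' for density of $\D_b(D^+D^-)$; you instead invoke von Neumann's theorem: since $D^+$ is closed and densely defined with $(D^+)^*=-D^-$, the product $D^+(D^+)^*=-D^+D^-$ is self-adjoint. This is arguably the cleaner route, because Gauss's formula by itself yields only symmetry, and the passage from symmetry to self-adjointness is left implicit in the paper, whereas von Neumann's theorem delivers self-adjointness in one stroke; your Fourier-multiplier realization of $D^+D^-$ as multiplication by $-(\xi_1^2+\xi_2^2)$ then exhibits the spectral measure explicitly, so that the truncation argument $u_R=E([-R,R])u$, which the paper leaves as ``standard,'' is spelled out concretely, as is the final identification $\Hl\cong\Hs\op\Hs$. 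One cosmetic point: from $\norm{(D^+D^-)^nu_R}_\Hs\leq R^n\norm{u}_\Hs$ you should take, say, $C_{u_R}=R\max(1,\norm{u}_\Hs)$ rather than $C_{u_R}=R$ in order to match the paper's definition $\norm{(D^+D^-)^nu}_\Hs\leq C_u^n$ literally; this changes nothing of substance.
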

\begin{proof}
Applying Gauss formula twice one finds that $D^+D^-$ is self-adjoint. 
Therefore, by standard argument, 
$\D_b(D^+D^-)$ is dense in $\Hs$ and $\D$ is dense in $\Hl$.
\end{proof}
\begin{prop}\label{prop:ess2}
The operator $D^-\ot S^+-D^+\ot S^-$ is essentially self-adjoint, 
and $\D$ is the core for its closure.
\end{prop}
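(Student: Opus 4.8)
The plan is to verify the hypotheses of Nelson's analytic-vector theorem for the symmetric operator $T:=D^-\ot S^+-D^+\ot S^-$ on $\D$. Symmetry and closability of $T$ were recorded just above, and Proposition~\ref{prop:ess} gives that $\D$ is dense, hence total. Since a bounded vector is a fortiori analytic, the entire statement reduces to showing that each generating tensor $u\ot\varphi$, with $u\in\D_b(D^+D^-)$ and $\varphi\in\C^2$, is a bounded vector for $T$, \ie $\norm{T^m(u\ot\varphi)}_\Hl\leq KM^m$ for constants $K,M>0$ and all $m\in\N$.

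The decisive step is an algebraic collapse of $T^2$. Because $S^\pm$ are the raising/lowering operators on $\C^2$ one has $(S^\pm)^2=0$, while $S^+S^-$ and $S^-S^+$ are the orthogonal projections onto $\ket{\frac{1}{2}}$ and $\ket{-\frac{1}{2}}$, whose sum is $\id_{\C^2}$. Since partial derivatives commute, $D^+D^-=D^-D^+=\nabla_1^2+\nabla_2^2$, so the two purely spin-diagonal terms of $T^2$ share the same spatial factor while the $(S^\pm)^2$ terms vanish, giving
\[
T^2=-(D^+D^-)\ot\id_{\C^2}.
\]
On the tensors at hand the even powers are therefore explicit, $T^{2n}(u\ot\varphi)=(-1)^n\big((D^+D^-)^nu\big)\ot\varphi$, whence $\norm{T^{2n}(u\ot\varphi)}_\Hl\leq C_u^{\,n}\norm{\varphi}_{\C^2}$ directly from the bounded-vector hypothesis on $u$.

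For the odd powers I apply $T$ once more to $w\ot\varphi$ with $w:=(D^+D^-)^nu$ and bound the two summands $D^\mp w\ot S^\pm\varphi$ by the triangle inequality. The gradient factors are controlled by interpolation: using $(D^\pm)^*=-D^\mp$ and Cauchy--Schwarz,
\[
\norm{D^\pm w}_\Hs^2=-\langle w,(D^+D^-)w\rangle\leq\norm{w}_\Hs\,\norm{(D^+D^-)w}_\Hs\leq C_u^{\,2n+1}\,,
\]
so that $\norm{D^\pm w}_\Hs\leq C_u^{\,n+1/2}$; combined with $\norm{S^\pm}\leq1$ this yields $\norm{T^{2n+1}(u\ot\varphi)}_\Hl\leq 2\,C_u^{\,n+1/2}\norm{\varphi}_{\C^2}$. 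Both parities together give the desired geometric bound, so $u\ot\varphi$ is a bounded vector; Nelson's theorem then makes $T$ essentially self-adjoint on $\D$, and since $\ol T=\Uso$ by definition, $\D$ is a core for $\Uso$.

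I expect the main obstacle to be the domain bookkeeping rather than the computation: one must confirm that $u\ot\varphi$ actually lies in $\dom(T^m)$ for every $m$, so that the compositions of unbounded factors in the identities above are legitimate and not merely formal. This is exactly ensured by $u\in\D_b(D^+D^-)$, which puts $u$ in $\dom((D^+D^-)^{n+1})$ for all $n$; the same interpolation inequality then places $(D^+D^-)^nu$ in $\dom D^\pm$, closing the chain of domain inclusions needed to justify every application of $T$.
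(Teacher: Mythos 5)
Your proof is correct and takes essentially the same route as the paper's: both reduce the claim to showing that elementary tensors in $\D$ are bounded (hence analytic) vectors, using the spin algebra ($(S^\pm)^2=0$, $S^+S^-+S^-S^+=\id_{\C^2}$) to collapse even powers to $(-1)^n(D^+D^-)^n\ot\id_{\C^2}$, the same interpolation inequality $\norm{D^\pm w}_\Hs^2=-\braket{w,D^+D^-w}_\Hs\leq\norm{w}_\Hs\norm{D^+D^-w}_\Hs$ for the odd powers, and Nelson's theorem to conclude. The only cosmetic difference is that the paper performs the even/odd computation by induction with explicit spin components $\varphi_{\pm\frac{1}{2}}$, whereas you phrase it as the operator identity $T^2=-(D^+D^-)\ot\id_{\C^2}$ on $\D$.
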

\begin{proof}
Consider an elementary tensor $u\ot\varphi\in\D$. 
Each $\varphi\in\C^2$ is of the form $\varphi=\sum_s\varphi_s\ket{s}$, 
with $\varphi_s\in\C$. Let $u_n=(D^+D^-)^nu$. Using \eqref{eq:Saction} 
we have by induction
\begin{align*}
(D^-\ot S^+-D^+\ot S^-)^{2n}(u\ot\varphi)=&
(-1)^nu_n\ot\varphi\,, 
\\
(D^-\ot S^+-D^+\ot S^-)^{2n+1}(u\ot\varphi)=&
(-1)^nD^-u_n\ot \varphi_{-\frac{1}{2}} \ket{\frac{1}{2}} 
\\
&-(-1)^nD^+u_n\ot \varphi_{\frac{1}{2}}\ket{-\frac{1}{2}}
\end{align*}
for $n\in\N_0$. Since $\norm{u_n}_\Hs\leq C_u^n$ 
(see proposition~\ref{prop:ess}), we have
\begin{equation}
\norm{(D^-\ot S^+-D^+\ot S^-)^{2n}(u\ot\varphi)}_\Hl\leq
\sqrt{C_u^{2n}}\norm{\varphi}_{\C^2}\,. 
\label{eq:1t}
\end{equation}
For odd powers we have 
\[\norm{(D^-\ot S^+-D^+\ot S^-)^{2n+1}(u\ot\varphi)}_\Hl\leq
\abs{\varphi_{-\frac{1}{2}}}\,\norm{D^-u_n}_\Hs
+\abs{\varphi_{\frac{1}{2}}}\,\norm{D^+u_n}_\Hs\,.\]
But
\[0\leq\norm{D^\pm u_n}_\Hs^2=
-\braket{u_n,u_{n+1}}_\Hs\leq\norm{u_n}_\Hs\,
\norm{u_{n+1}}_\Hs\leq C_u^{2n+1}\]
and hence
\begin{equation}
\norm{(D^-\ot S^+-D^+\ot S^-)^{2n+1}(u\ot\varphi)}_\Hl
\leq
\sqrt{C_u^{2n+1}}(\abs{\varphi_{-\frac{1}{2}}}
+\abs{\varphi_{\frac{1}{2}}})\,.
\label{eq:2t}
\end{equation}
It follows from \eqref{eq:1t} and \eqref{eq:2t} that 
$u\ot\varphi\in\D$ is a bounded, hence analytic, vector for 
$D^-\ot S^+-D^+\ot S^-$. Since $\D$ is dense in $\Hl$ by 
proposition~\ref{prop:ess}, the set of analytic vectors for 
a symmetric operator $D^-\ot S^+-D^+\ot S^-$ is also dense in $\Hl$. 
Then, by Nelson theorem, $D^-\ot S^+-D^+\ot S^-$ is essentially 
self-adjoint.
\end{proof}
By proposition~\ref{prop:ess2}, the operator $h$ on 
$\dom h=\dom h^0$ is self-adjoint.
\section{Integral kernel of the unitary group}
Recall that the confluent Humbert function $\Phi_3$ possesses 
the series representation \cite[equation~(40)]{Srivastava} 
\[\Phi_3(a;b;x,y)=\sum_{m,n\in\N_0}
\frac{(a)_m}{(b)_{m+n}}\frac{x^m}{m!}
\frac{y^n}{n!}\]
which is absolutely convergent for $\abs{x},\abs{y}<\infty$, 
provided that the Pochhammer symbols exist (\ie $-b\notin\N_0$). 
The $\Phi_3$ function is one of the seven confluent forms of the 
Appell $F_1$ function; see also \cite{Debiard}.
\begin{thm}\label{thm:lE}
For $t>0$ and $u\ot\ket{s}\in\D$
\begin{equation}
\e^{-th}(u\ot\ket{s})=\sum_{s^\prime}
(G_t^{s^\prime s}*u)\ot\ket{s^\prime}
\label{eq:kerh-1}
\end{equation}
where 
\begin{subequations}\label{eq:kerh-2}
\begin{equation}
G_t^{s^\prime s}(x)=
K_t^0(x)[\delta_{s^\prime s}(a_t-2\beta s b_t)
+\frac{\alpha}{2t}b_t
(\delta_{s^\prime,\frac{1}{2}}\delta_{s,-\frac{1}{2}}x^-
-\delta_{s^\prime,-\frac{1}{2}}\delta_{s,\frac{1}{2}}x^+)]\,, 
\end{equation}
\begin{equation}
a_t=\Phi_3(1;1/2;t(\alpha/2)^2,(\beta t/2)^2)\,, \quad 
b_t=t\Phi_3(1;3/2;t(\alpha/2)^2,(\beta t/2)^2)
\end{equation}
\end{subequations}
for a.e. $x=(x^1,x^2,x^3)\in\R^3$; $x^\pm=x^1\pm\img x^2$.
\end{thm}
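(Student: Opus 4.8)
The plan is to diagonalise $h^0$ by the spatial Fourier transform $\F$ and reduce the whole computation to exponentiating a single $2\times2$ matrix. Under $\F$ the operator $h^0$ becomes multiplication by $\abs{k}^2$, while $D^\pm$ becomes multiplication by $\img k^\pm$ with $k^\pm=k_1\pm\img k_2$; hence $\widehat h(k):=\F h\,\F^{-1}$ acts fibrewise on $\C^2$ as $\widehat h(k)=\abs{k}^2\id_{\C^2}+M(k)$, where
\[M(k)=\alpha(\img k^-S^+-\img k^+S^-)+2\beta S^3=\begin{bmatrix}\beta & \img\alpha k^- \\ -\img\alpha k^+ & -\beta\end{bmatrix}.\]
Because $\abs{k}^2\id_{\C^2}$ is central, $\e^{-t\widehat h(k)}=\e^{-t\abs{k}^2}\e^{-tM(k)}$, so the spin structure is entirely encoded in $\e^{-tM(k)}$.

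The matrix $M(k)$ is Hermitian and traceless, and a direct computation gives $M(k)^2=\lambda(k)^2\id_{\C^2}$ with $\lambda(k)^2=\beta^2+\alpha^2(k_1^2+k_2^2)$. Splitting the exponential series into even and odd powers then yields
\[\e^{-tM(k)}=\cosh(\lambda t)\,\id_{\C^2}-\frac{\sinh(\lambda t)}{\lambda}\,M(k),\]
from which I would read off the four entries of $\e^{-t\widehat h(k)}$ explicitly. The diagonal entries are $\e^{-t\abs{k}^2}(\cosh(\lambda t)\mp\beta\,\lambda^{-1}\sinh(\lambda t))$ for $s=\pm\frac12$, matching the expected $\delta_{s's}(a_t-2\beta s\,b_t)$ structure, while the off-diagonal entries carry the factor $\img\alpha k^\pm$.

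It then remains to invert $\F$ entrywise. The off-diagonal entries become (up to sign) $\alpha D^\pm$ applied to the inverse transform of $\e^{-t\abs{k}^2}\lambda^{-1}\sinh(\lambda t)$; since $D^\pm K_t^0=-\frac{x^\pm}{2t}K_t^0$, this is exactly where the prefactor $\frac{\alpha}{2t}$ and the linear factors $x^\pm$ in \eqref{eq:kerh-2} originate. Everything therefore reduces to the two scalar inverse transforms of $\e^{-t\abs{k}^2}\cosh(\lambda t)$ and $\e^{-t\abs{k}^2}\lambda^{-1}\sinh(\lambda t)$. These I would compute by expanding $\cosh$ and $\lambda^{-1}\sinh$ as power series in $\lambda^2=\beta^2+\alpha^2(k_1^2+k_2^2)$, factoring off the $k_3$-Gaussian, and evaluating the transverse radial integrals $\int_0^\infty J_0(\rho r)\e^{-t\rho^2}\rho^{2j+1}\dd\rho$ (Bessel $J_0$, $r^2=(x^1)^2+(x^2)^2$) against the Gaussian; collecting the two-variable series in $t(\alpha/2)^2$ and $(\beta t/2)^2$ produces $K_t^0$ times a Humbert-type series, which I would match to $\Phi_3$ with the parameters $1;1/2$ and $1;3/2$.

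I expect the special-function bookkeeping to be the main obstacle: carrying out the transverse integrals, justifying the interchange of summation and integration (where the bounded-vector structure of $\D$ and the entire-function character of $\Phi_3$ for $\abs{x},\abs{y}<\infty$ enter), and recognising the resulting double series as $\Phi_3$ with precisely the stated arguments. A convenient cross-check---one that could even replace the direct inversion---is to verify that the proposed right-hand side $v(t)=\sum_{s'}(G_t^{s's}*u)\ot\ket{s'}$ solves $\pd_t v=-hv$ with $v(0^+)=u\ot\ket{s}$: the initial datum follows from $a_0=1$, $b_0=0$ and $K_t^0\to\delta$ (the $x^\pm$ factors killing the off-diagonal limit), while the evolution equation reduces, through $\pd_t K_t^0=\Delta K_t^0$, to a pair of differential-recurrence relations coupling $a_t$ and $b_t$. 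Since $\D$ is a core for $h$, uniqueness of the semigroup then identifies $v(t)$ with $\e^{-th}(u\ot\ket{s})$.
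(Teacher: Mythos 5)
Your proposal is correct and is essentially the paper's own argument carried out on the Fourier side: the identity $M(k)^2=\lambda(k)^2\id_{\C^2}$ with the resulting $\cosh$/$\sinh$ splitting of $\e^{-tM(k)}$ is exactly the paper's relation $U^{2n}(u\ot\ket{s})=(-\alpha^2D^+D^-+\beta^2\id_\Hs)^nu\ot\ket{s}$ with its even/odd splitting of $\e^{-tU}$ on the core $\D$, your two scalar inverse transforms of $\e^{-t\abs{k}^2}\cosh(t\lambda(k))$ and $\e^{-t\abs{k}^2}\sinh(t\lambda(k))/\lambda(k)$ are precisely the symbols whose kernels are computed in lemmas~\ref{lem:lE1} and~\ref{lem:lE2}, and the off-diagonal $x^\pm$ factors arise from the same relation $D^\pm l_t=-\tfrac{x^\pm}{2t}l_t$. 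The only divergences are cosmetic (you diagonalize by $\F$ before splitting rather than after, and you propose cylindrical coordinates with Bessel $J_0$ where the paper uses spherical coordinates with ${}_0F_1$), and both routes terminate in the same $\Phi_3$ double-series identification, which you outline but do not carry out.
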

\begin{cor}\label{cor:lE}
For $t\in\R\setm\{0\}$ and $u\ot\ket{s}\in\D$
\begin{subequations}\label{eq:ker}
\begin{align}
\e^{-\img th}(u\ot\ket{s})=&
\slim_{\epsilon\searrow0}
\sum_{s^\prime}(G_{\epsilon+\img t}^{s^\prime s}*u)
\ot\ket{s^\prime} \\
=&
\sum_{s^\prime}\LIM\int
G_{\img t}^{s^\prime s}(\cdot-y)u(y)\dd y\ot\ket{s^\prime}\,.
\end{align}
\end{subequations}
\end{cor}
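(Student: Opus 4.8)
The plan is to pass to the Fourier representation, where $h$ becomes multiplication by a $2\times2$ matrix symbol, to exponentiate that symbol in closed form, and then to invert the transform; the confluent Humbert functions should emerge from the resulting radial integrals. A useful preliminary observation is that $\Uso$ and $\UB$ are assembled from the constant-coefficient operators $D^\pm$ and from the spatially trivial spin operators, so on $\D$ they commute with $h^0=-\Delta\ot\id_{\C^2}$; hence $\e^{-th}=\e^{-th^0}\e^{-tU}$, and it is enough to understand $\e^{-tU}$ together with the free factor $\e^{-th^0}$, which carries $K_t^0$. Concretely, I would apply the unitary Fourier transform $\F$ on $\Hs$, extended by $\id_{\C^2}$. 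Then $\F D^\pm\F^{-1}$ is multiplication by $\img p^\pm$ with $p^\pm=p_1\pm\img p_2$, and reading off the action of $S^\pm,S^3$ from \eqref{eq:Saction} shows that $\F h\F^{-1}$ is multiplication by
\[
\hat h(p)=\abs{p}^2\,\id_{\C^2}+M(p),\qquad
M(p)=\begin{bmatrix}\beta & \img\alpha p^-\\ -\img\alpha p^+ & -\beta\end{bmatrix}.
\]
Because $u\ot\ket s\in\D$ is a bounded vector (proposition~\ref{prop:ess2}), its Fourier transform has controlled in-plane growth, the exponential series converges and transforms termwise, and $\F[\e^{-th}(u\ot\ket s)](p)=\e^{-t\hat h(p)}\hat u(p)\ket s$.

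Next I would exponentiate. The symbol $M(p)$ is traceless with $M(p)^2=\lambda(p)^2\id_{\C^2}$, where $\lambda(p)^2=\beta^2+\alpha^2p^+p^-=\beta^2+\alpha^2(p_1^2+p_2^2)$; this is the same quantity whose minimisation over $p$ produces the band edges $\abs{p}^2\pm\lambda$ and the stated bound $-\Sigma$, a welcome internal consistency check. Cayley--Hamilton then gives
\[
\e^{-t\hat h(p)}=\e^{-t\abs{p}^2}\Bigl(\cosh(t\lambda)\,\id_{\C^2}-\tfrac{\sinh(t\lambda)}{\lambda}\,M(p)\Bigr),
\]
so every matrix entry is $\e^{-t\abs{p}^2}$ times one of $\cosh(t\lambda)$, $\beta\lambda^{-1}\sinh(t\lambda)$, or $\img\alpha p^\pm\lambda^{-1}\sinh(t\lambda)$.

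The concluding step is entrywise Fourier inversion. Since $\lambda$ depends only on the in-plane momentum $\rho=(p_1^2+p_2^2)^{1/2}$ while $\e^{-t\abs{p}^2}=\e^{-tp_3^2}\e^{-t\rho^2}$, the inversion factorises: the $p_3$-integral restores the one-dimensional Gaussian, the in-plane inversion is a zeroth-order Hankel transform, and the prefactor $p^\pm$ differentiates the in-plane Gaussian, which is exactly what manufactures the linear factors $x^\pm$ in the off-diagonal kernels. Recombining the $p_3$- and in-plane pieces rebuilds $K_t^0$ and assembles the matrix $(G_t^{s's})$, giving \eqref{eq:kerh-1}. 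I expect the main obstacle to be this last evaluation: expanding $\cosh(t\lambda)$ and $\lambda^{-1}\sinh(t\lambda)$ in powers of $\beta^2+\alpha^2\rho^2$ and integrating term by term against $J_0(r\rho)\e^{-t\rho^2}\rho\dd\rho$ yields a double series in $t(\alpha/2)^2$ and $(\beta t/2)^2$ whose Pochhammer pattern must be matched to $\Phi_3(1;\tfrac12;\cdot,\cdot)$ for the even part $a_t$ and to $\Phi_3(1;\tfrac32;\cdot,\cdot)$ for the odd part $b_t$; getting the half-integer parameters $\tfrac12,\tfrac32$ and the spatial dependence right is the delicate bookkeeping, and the Gaussian moment integrals are where care is needed. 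I would pin down the normalisation on the solvable case $\alpha=0$, where $\Phi_3(1;\tfrac12;0,(\beta t/2)^2)=\cosh(\beta t)$ and $t\,\Phi_3(1;\tfrac32;0,(\beta t/2)^2)=\beta^{-1}\sinh(\beta t)$ reproduce the diagonal kernels $\e^{\mp\beta t}K_t^0$ and the vanishing off-diagonal, and finally extend from $\D$ to \eqref{eq:kerh-2} by density.

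Finally, for corollary~\ref{cor:lE} I would assume theorem~\ref{thm:lE} and argue by analytic continuation. Since $h$ is self-adjoint and bounded below, $z\mapsto\e^{-zh}$ is a holomorphic $\Hl$-valued semigroup on $\{\Re z>0\}$; the right-hand side of \eqref{eq:kerh-1} is likewise holomorphic in $z$ there, because $K_z^0$ is holomorphic for $\Re z>0$ and $a_z,b_z$ are entire in their ($z$-holomorphic) arguments. The two sides agree for $z=t>0$ by theorem~\ref{thm:lE}, hence on all of $\{\Re z>0\}$ by the identity theorem. Putting $z=\epsilon+\img t$ and letting $\epsilon\searrow0$, the left side tends strongly to the boundary value $\e^{-\img th}$ of the semigroup, i.e.\ to the unitary group, which is the first equality of \eqref{eq:ker}. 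The second equality is the identification of $\slim_{\epsilon\searrow0}(G_{\epsilon+\img t}^{s's}*u)$ with the $L^2$-convergent integral $\LIM\int G_{\img t}^{s's}(\cdot-y)u(y)\dd y$; as for the free kernel in \eqref{eq:2}, the point is that $G_{\img t}^{s's}$ is not integrable, so the convolution exists only in the $\LIM$ (truncated, $L^2$-norm) sense, and this passage is justified exactly as in the free case, using that $u\in\D$ is sufficiently regular.
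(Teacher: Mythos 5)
Your argument for the corollary is correct, but it runs on a different engine than the paper's. Both proofs share the skeleton: take theorem~\ref{thm:lE} as given, extend the kernel formula \eqref{eq:kerh-1} to complex time $z=\epsilon+\img t$, and let $\epsilon\searrow0$. The difference is in how the two legs are justified. You make the continuation explicit via the vector-valued identity theorem (both sides holomorphic on $\{\Re z>0\}$, equal on $(0,\infty)$) and then invoke the abstract spectral fact that $z\mapsto\e^{-zh}$ is strongly continuous up to the imaginary axis. The paper never states the continuation (it simply writes $G^{s^\prime s}_{\img(t-\img\epsilon)}$, tacitly using that the proofs of lemmas~\ref{lem:lE1} and \ref{lem:lE2} go through for $\Re t>0$), and instead proves the boundary convergence by hand: it writes $\e^{-\img(t-\img\epsilon)h}-\e^{-\img th}=\e^{-\img th}(\e^{-\epsilon h}-\id_\Hl)$, uses unitarity of $\e^{-\img th}$ to reduce to real time, and then applies theorem~\ref{thm:lE} at time $\epsilon$ together with $a_\epsilon\to1$, $b_\epsilon\to0$ and $\norm{K^0_\epsilon*u-u}_\Hs\le\epsilon\norm{\Delta u}_\Hs\to0$ on $\D$. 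So your route is cleaner precisely where the paper is implicit (complex-time validity of the kernel formula), while the paper's route is more elementary and self-contained, needing no holomorphy of the convolution map and no appeal to holomorphic-semigroup theory; it also doubles as a consistency check on the kernels at small real times.

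Two points you assert rather than prove, neither a genuine gap but both worth writing out: (i) holomorphy of $z\mapsto G_z^{s^\prime s}*u$ as an $\Hs$-valued map on $\{\Re z>0\}$ --- easiest on the Fourier side, where by Plancherel the entries $\e^{-z\abs{\xi}^2}\cosh(z\lambda(\xi))$, $\e^{-z\abs{\xi}^2}\sinh(z\lambda(\xi))/\lambda(\xi)$ (with $\lambda(\xi)^2=\alpha^2\xi^+\xi^-+\beta^2$) are entire in $z$ and locally uniformly dominated, so Morera plus dominated convergence applies; and (ii) the identification of the strong limit with $\LIM\int G^{s^\prime s}_{\img t}(\cdot-y)u(y)\dd y$, which you defer to the free case \eqref{eq:2}. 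The paper is equally terse here, handling it by truncation $u_R=1_{\{\abs{x}\le R\}}u$, so that the convolution with the non-integrable kernel (note the off-diagonal entries grow like $\abs{x}$) makes sense and converges in $L^2$. Incidentally, your matrix-exponential organization of the theorem itself, via $M(p)^2=\lambda(p)^2\id_{\C^2}$, is equivalent to the paper's even/odd splitting $U^{2n}(u\ot\ket{s})=(-\alpha^2D^+D^-+\beta^2\id_\Hs)^nu\ot\ket{s}$, so that part is the same computation in different clothing.
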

\begin{rem}\label{rem:rem1}
1) Let $U_t=\e^{\img th}\vert_\D$. 
Then $U_t\subseteq \e^{\img th}\Rightarrow
U_t^*=\e^{-\img th}\Rightarrow\ol{U_t}=\e^{\img th}$. 
Thus, the extension $\e^{\img th}$ of $U_t$ acts on $u\in\Hs$ 
by \eqref{eq:ker}: 
For all $\epsilon>0$ and a given $f\in\Hl$ one finds a $g\in\D$ such that 
$\norm{f-g}_\Hl<\epsilon$, so $\norm{\e^{\img th}f}_\Hl\leq
\norm{U_tg}_\Hl+\epsilon$.

2) Using $(h-z\id_\Hl)^{-1}=\int_0^\infty\e^{tz}\e^{-th}\dd t$ 
for $\Re z<-\Sigma$, one shows that
\begin{subequations}\label{eq:cgrtsf}
\begin{align}
G_1(x;z)=&\int_0^\infty\e^{tz}K_t^0(x)b_t \dd t\,, 
\quad x\in\R^3\,,  
\\
D^\pm G_1(x;z)=&
-\frac{x^\pm}{2}\int_0^\infty\e^{tz}K_t^0(x)b_t t^{-1} 
\dd t\,, \quad x\in\R^3\setm\{0\}\,,  
\\
G_2(x;z)=&\int_0^\infty\e^{tz}K_t^0(x)a_t \dd t\,, 
\quad x\in\R^3\setm\{0\}
\end{align}
\end{subequations}
provided that the hypergeometric series on the left
exist (theorem~3.3 in \cite{Jursenas14}). Initially, 
relations in \eqref{eq:cgrtsf} are valid for a.e. $x$, 
as it is seen from the comparison between the single-particle
Green function in \cite{Jursenas14} and that derived by using 
\eqref{eq:kerh-1}. However, the direct computation of the 
integrals on the right shows that the equalities 
hold true for all $x$. One therefore refers to \eqref{eq:cgrtsf} 
as the integral representations of the hypergeometric series 
$G_1$, $D^\pm G_1$, $G_2$.

3) As $\alpha\searrow0$
\begin{subequations}\label{eq:atbt-small}
\begin{align}
a_{\img t}=&\cos(\beta t)
+\frac{2\img}{\beta}(\alpha/2)^2\sin(\beta t)+O(\alpha^4)\,, 
\\
b_{\img t}=&\frac{\img}{\beta}\sin(\beta t)
+\frac{2}{\beta^2}(\alpha/2)^2(\cos(\beta t)
-\frac{\sin(\beta t)}{\beta t})+O(\alpha^4)\,.
\end{align}
\end{subequations}
If further $\beta=0$, one assumes the limit $\beta\searrow0$. 
Similar relations hold for $\alpha$ arbitrary and $\beta$ small. 
We use \eqref{eq:atbt-small} for computing the two-particle 
Green function later on.
\end{rem}

Now we give the proofs of the above results.
The proof of theorem~\ref{thm:lE} relies on the two lemmas.
\begin{lem}\label{lem:lE1}
For $t>0$, the operator
\begin{equation}
\e^{t\Delta\vert_{H^2(\R^3)}}
\cosh(t\sqrt{-\alpha^2D^+D^-+\beta^2\id_\Hs})
\vert_{\D_b(D^+D^-)}
\label{eq:bbb1}
\end{equation}
in $\Hs$ admits the unique extension which is bounded in $\Hs$. 
The integral kernel $k_t$ of the operator in \eqref{eq:bbb1} 
is given by
\[k_t(x)=K_t^0(x)\Phi_3(1;1/2;t(\alpha/2)^2,(\beta t/2)^2)\]
for a.e. $x\in \R^3$.
\end{lem}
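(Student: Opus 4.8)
The plan is to compute the integral kernel by diagonalizing the operator $-\alpha^2 D^+D^-+\beta^2\id_\Hs$ through the Fourier transform, and then to recognize the resulting momentum-space multiplier as the Laplace-type integral defining $\Phi_3$. First I would observe that under the Fourier transform $\mathcal{F}$ on $\Hs=L^2(\R^3)$, the operator $D^+D^-$ becomes multiplication. Indeed $D^\pm=(\nabla_1\pm\img\nabla_2)\vert_{H^1}$, so $\mathcal{F}D^\pm\mathcal{F}^{-1}$ is multiplication by $\img(p_1\pm\img p_2)=\img p^{\pm}$ (with an appropriate sign convention for $\mathcal{F}$), whence $\mathcal{F}D^+D^-\mathcal{F}^{-1}$ is multiplication by $-(p_1^2+p_2^2)=-\abs{p_\perp}^2$, the negative squared transverse momentum. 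Consequently $-\alpha^2D^+D^-+\beta^2$ acts as multiplication by $\alpha^2\abs{p_\perp}^2+\beta^2\geq0$, and $\Delta\vert_{H^2}$ as multiplication by $-\abs{p}^2$. The operator in \eqref{eq:bbb1} is therefore unitarily equivalent, on the core $\D_b(D^+D^-)$, to multiplication by the bounded function
\begin{equation}
M_t(p)=\e^{-t\abs{p}^2}\cosh\!\bigl(t\sqrt{\alpha^2\abs{p_\perp}^2+\beta^2}\bigr).
\label{eq:multiplier-plan}
\end{equation}
Since $\cosh(t\sqrt{\alpha^2\abs{p_\perp}^2+\beta^2})\leq\e^{t\sqrt{\alpha^2\abs{p_\perp}^2+\beta^2}}$ grows only like $\e^{t\alpha\abs{p_\perp}}$, the Gaussian factor $\e^{-t\abs{p}^2}$ dominates and $M_t$ is bounded; hence multiplication by $M_t$ is a bounded operator on $L^2$, and pulling back through $\mathcal{F}$ gives the unique bounded extension of \eqref{eq:bbb1}, establishing the first assertion.

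For the kernel I would compute $k_t=\mathcal{F}^{-1}M_t$ as a convolution kernel. The key step is to expand the $\cosh$ as an even power series and evaluate term by term: writing $\cosh(t\sqrt{w})=\sum_{m\geq0}\frac{(tw^{1/2})^{2m}}{(2m)!}=\sum_{m\geq0}\frac{t^{2m}w^m}{(2m)!}$ with $w=\alpha^2\abs{p_\perp}^2+\beta^2$, each factor $\abs{p_\perp}^{2k}$ pulls out of the inverse Fourier transform as a differential operator (or is handled directly), and the transform of $\e^{-t\abs{p}^2}$ against monomials in $\abs{p_\perp}^2$ produces Gaussians times polynomials. The transverse and longitudinal directions factor, so the $x^3$-dependence contributes the one-dimensional Gaussian and the transverse $(x^1,x^2)$-dependence carries the dependence on $\alpha$ and $\beta$. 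Collecting the double sum in the powers of $\alpha^2$ (from expanding $w^m$ by the binomial theorem) and of $\beta^2$ and using $(2m)!=4^m m!\,(1/2)_m$ together with the Gaussian moment identities, I expect the two summation indices to organize exactly into the defining double series
\[
\Phi_3(1;1/2;t(\alpha/2)^2,(\beta t/2)^2)
=\sum_{m,n\in\N_0}\frac{(1)_m}{(1/2)_{m+n}}\frac{\bigl(t(\alpha/2)^2\bigr)^m}{m!}\frac{\bigl((\beta t/2)^2\bigr)^n}{n!},
\]
with the overall prefactor collapsing to $K_t^0(x)$. The absolute convergence of $\Phi_3$ for all finite arguments justifies the interchange of summation and integration.

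The main obstacle will be the bookkeeping in the term-by-term inverse transform: I must carefully track the Gaussian moments of $\abs{p_\perp}^{2j}$ against $\e^{-t\abs{p_\perp}^2}$ so that the resulting coefficients reproduce the precise Pochhammer ratio $(1)_m/(1/2)_{m+n}$ rather than some shifted variant. A secondary subtlety is the dependence on $\abs{x_\perp}$ versus $x^3$: since the multiplier $M_t$ depends on $p$ only through $\abs{p_\perp}$ and $\abs{p_3}$, the kernel $k_t$ will depend only on $\abs{x_\perp}^2$ and $(x^3)^2$, and I must confirm that after summation the two arguments of $\Phi_3$ assemble into the stated $t(\alpha/2)^2$ and $(\beta t/2)^2$ without residual $x$-dependence beyond the $K_t^0$ factor — this is precisely what forces the first Pochhammer argument to be independent of $x$, which is the delicate point to verify rather than assume. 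Once the double series is identified, matching it to the $\Phi_3$ definition is immediate, completing the proof.
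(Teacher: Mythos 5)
Your first assertion (the unique bounded extension) is handled correctly, and in fact more convincingly than in the paper: the paper's own estimate bounds $\norm{p_t(-\img\nabla)u}_2$ by $\cosh(t\sqrt{\alpha^2C_u+\beta^2})$, a constant depending on $u$ through $C_u$ and carrying no factor of $\norm{u}_2$, so by itself it only shows that the defining series converges on $\D_b(D^+D^-)$; your observation that the symbol $M_t$ lies in $L^\infty$ is what genuinely produces boundedness. Your kernel strategy is also the same in spirit as the paper's (expand $\cosh$, integrate term by term, resum into $\Phi_3$), with cylindrical factorization in place of the paper's spherical coordinates.

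However, there is a genuine gap, located exactly at the point you yourself single out as ``the delicate point to verify rather than assume'': you never verify it, and it fails. Carrying out your transverse moment integrals explicitly, one finds
\begin{equation*}
\frac{1}{(2\pi)^2}\int_{\R^2}\e^{\img x_\perp\cdot p_\perp}
\e^{-t\abs{p_\perp}^2}\abs{p_\perp}^{2m}\dd p_\perp
=\frac{m!}{4\pi t^{m+1}}\,
\e^{-\abs{x_\perp}^2/(4t)}\,
L_m\bigl(\abs{x_\perp}^2/(4t)\bigr)\,,
\end{equation*}
where $L_m$ is the $m$th Laguerre polynomial, and these polynomial factors do \emph{not} cancel in the double resummation: for instance, with $\beta=0$ the order-$\alpha^2$ part of the kernel equals $\frac{\alpha^2t}{2}\bigl(1-\frac{\abs{x_\perp}^2}{4t}\bigr)K_t^0(x)$, not $\frac{\alpha^2t}{2}K_t^0(x)$. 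No bookkeeping can repair this: $M_t$ depends on $\abs{p_\perp}$ and $p_3$ separately, hence is not radial, so its inverse Fourier transform cannot agree a.e.\ with the radial function $K_t^0\,\Phi_3$; equivalently, $\int k_t\dd x=M_t(0)=\cosh(\beta t)$, whereas $\int K_t^0\Phi_3\dd x=\Phi_3(1;1/2;t(\alpha/2)^2,(\beta t/2)^2)>\cosh(\beta t)$ as soon as $\alpha>0$. Thus your plan, executed honestly, ends by contradicting the displayed kernel formula rather than proving it: the stated $k_t$ is the restriction of the true (cylindrically symmetric) kernel to the axis $x_\perp=0$. The comparison with the paper's proof is instructive here: in \eqref{eq:ints} the same angle $\theta$ is used simultaneously as the angle between $\xi$ and $x$ (in $\e^{\img\abs{x}\rho\cos\theta}$) and as the polar angle of $\xi$ measured from the third coordinate axis (in $\sin^{2m}\theta$, which comes from $(\xi^+\xi^-)^m$), and this identification is legitimate only when $x$ is parallel to that axis. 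In other words, the paper silently makes the very assumption whose necessity you correctly flagged but then left unexamined; closing that gap is not a matter of bookkeeping but would require correcting the statement itself.
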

\begin{lem}\label{lem:lE2}
For $t>0$, the operator
\begin{equation}
\e^{t\Delta\vert_{H^2(\R^3)}}
\frac{\sinh(t\sqrt{-\alpha^2D^+D^-+\beta^2\id_\Hs})}{
\sqrt{-\alpha^2D^+D^-+\beta^2\id_\Hs}}\vert_{\D_b(D^+D^-)}
\label{eq:bbb2}
\end{equation}
in $\Hs$ admits the unique extension which is bounded in $\Hs$. 
The integral kernel $l_t$ of the operator in \eqref{eq:bbb2} 
is given by
\[l_t(x)=K_t^0(x)t\Phi_3(1;3/2;t(\alpha/2)^2,(\beta t/2)^2)\]
for a.e. $x\in \R^3$.
\end{lem}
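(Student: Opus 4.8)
The plan is to treat the operator in \eqref{eq:bbb2} as a spatial Fourier multiplier and to read off its convolution kernel, along the same lines as for Lemma~\ref{lem:lE1}. Write $\Lambda=\sqrt{-\alpha^2D^+D^-+\beta^2\id_\Hs}$ and note that $D^+D^-=\nabla_1^2+\nabla_2^2$ is the Laplacian in the first two coordinates, so that $\Lambda$ and $\e^{t\Delta}$ are functions of the mutually commuting self-adjoint operators $D^+D^-$ and $\Delta$. On a bounded vector $u\in\D_b(D^+D^-)$ the entire function $\lambda\mapsto\sinh(t\sqrt\lambda)/\sqrt\lambda=\sum_{n\geq0}t^{2n+1}\lambda^n/(2n+1)!$ of $\Lambda^2$ is given by a norm-convergent series, so \eqref{eq:bbb2} is well defined there. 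Passing to the Fourier transform $\mathcal{F}$, the operator becomes multiplication by
\begin{equation}
M_t(p)=\e^{-t\abs{p}^2}\,
\frac{\sinh\!\big(t\sqrt{\alpha^2 p_\perp^2+\beta^2}\big)}{\sqrt{\alpha^2 p_\perp^2+\beta^2}}\,,
\qquad p_\perp^2=p_1^2+p_2^2\,.
\label{eq:planmult}
\end{equation}
Since the Gaussian $\e^{-t\abs{p}^2}$ dominates the at most exponential growth of the hyperbolic factor (which tends to $t$ as $p_\perp\to0$), $M_t\in L^1(\R^3)\cap L^\infty(\R^3)$. Hence the operator is bounded, and since $\D_b(D^+D^-)$ is dense (proposition~\ref{prop:ess}) it has a unique bounded extension, acting as convolution with the continuous function $l_t=\mathcal{F}^{-1}M_t$.

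Next I would compute $l_t=\mathcal{F}^{-1}M_t$. The multiplier \eqref{eq:planmult} factorizes as $M_t(p)=\e^{-t p_3^2}\,M_t^\perp(\abs{p_\perp})$, the first factor inverting to the one-dimensional heat kernel $k_t^{(1)}(x^3)=\e^{-(x^3)^2/4t}/(4\pi t)^{1/2}$ in the third variable. As $M_t^\perp$ is radial in $(p_1,p_2)$, its two-dimensional inverse transform is the order-zero Hankel transform, so $l_t(x)=k_t^{(1)}(x^3)\,l_t^\perp(r)$ with $r=\big((x^1)^2+(x^2)^2\big)^{1/2}$ and
\begin{equation}
l_t^\perp(r)=\frac{1}{2\pi}\int_0^\infty
\e^{-t\rho^2}\,
\frac{\sinh\!\big(t\sqrt{\alpha^2\rho^2+\beta^2}\big)}{\sqrt{\alpha^2\rho^2+\beta^2}}\,
J_0(r\rho)\,\rho\dd\rho\,.
\label{eq:planhankel}
\end{equation}
Writing $K_t^0(x)=k_t^{(1)}(x^3)\,k_t^{(2)}(r)$ with $k_t^{(2)}(r)=\e^{-r^2/4t}/(4\pi t)$, the asserted formula is equivalent to the identity $l_t^\perp(r)=k_t^{(2)}(r)\,t\,\Phi_3(1;3/2;t(\alpha/2)^2,(\beta t/2)^2)$.

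To evaluate \eqref{eq:planhankel} I would insert the power series of $\sinh(t\sqrt\lambda)/\sqrt\lambda$ with $\lambda=\alpha^2\rho^2+\beta^2$, expand $(\alpha^2\rho^2+\beta^2)^n$ by the binomial theorem, and integrate term by term; the rapid Gaussian decay legitimizes the interchange by dominated convergence. Each term is a Weber--Schafheitlin integral,
\begin{equation}
\int_0^\infty\e^{-t\rho^2}\rho^{2k+1}J_0(r\rho)\dd\rho
=\frac{k!}{2t^{k+1}}\,\e^{-r^2/4t}\,L_k\!\Big(\frac{r^2}{4t}\Big)\,,
\label{eq:planWS}
\end{equation}
obtained by applying $(-\pd_t)^k$ to Weber's formula $\int_0^\infty\e^{-t\rho^2}J_0(r\rho)\rho\dd\rho=\e^{-r^2/4t}/2t$, where $L_k$ is the Laguerre polynomial. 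Substituting \eqref{eq:planWS} into the expanded \eqref{eq:planhankel} yields a double power series in $\alpha^2$ and $\beta^2$; using the Pochhammer identity $(2n+1)!=4^n n!\,(3/2)_n$, its coefficients are to be matched with the defining series of $\Phi_3(1;3/2;\cdot,\cdot)$ (as recalled before theorem~\ref{thm:lE}) at the arguments $t(\alpha/2)^2$ and $(\beta t/2)^2$.

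The main obstacle is precisely this resummation. The integrals \eqref{eq:planWS} introduce an a priori $r$-dependence through the Laguerre polynomials $L_k(r^2/4t)$, and the heart of the proof is to show that all of this $r$-dependence reorganizes into the single Gaussian $\e^{-r^2/4t}$ of $k_t^{(2)}(r)$, leaving an $x$-independent hypergeometric factor; only then does one obtain the clean form $t\,\Phi_3(1;3/2;\dots)$ and, downstream in theorem~\ref{thm:lE}, the possibility of letting $D^\pm$ act on $K_t^0$ alone. A useful cross-check that I would carry along is the pair of evolution relations $\pd_t l_t=\Delta l_t+k_t$ and $\pd_t k_t=\Delta k_t+(-\alpha^2D^+D^-+\beta^2)l_t$, with $l_0=0$ and $k_0=\delta$, which couple \eqref{eq:bbb2} to the operator of Lemma~\ref{lem:lE1}; at the level of the scalar coefficients they force $\pd_t b_t=a_t$, consistent with $a_t=\Phi_3(1;1/2;\dots)$ and $b_t=t\Phi_3(1;3/2;\dots)$, and I would use this to pin down the overall normalization.
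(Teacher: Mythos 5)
Your boundedness argument is sound, and in fact cleaner than the paper's: $M_t\in L^\infty(\R^3)$ already gives $L^2$-boundedness of the multiplier by Plancherel, and $M_t\in L^1(\R^3)$ makes the kernel $\F^{-1}M_t$ continuous, whereas the paper estimates the $\sinh$-series term by term on $\D_b(D^+D^-)$ and invokes density. The gap is in the kernel computation, and you have named it yourself: the entire content of the asserted formula is that the convolution kernel equals $K_t^0(x)$ times a factor \emph{independent of} $x$, and the step that would establish this --- the resummation of the Laguerre polynomials $L_k(r^2/4t)$ produced by your Hankel integrals into a single Gaussian --- is only announced (``the heart of the proof is to show\ldots''), never carried out. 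A proof cannot end where the difficulty begins.

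Worse, the deferred step is not hard but false, so it cannot be carried out. Completing your expansion (take $\beta=0$ for clarity and use $(2k+1)!=4^kk!\,(\frac{3}{2})_k$) gives, for the true kernel,
\begin{equation*}
(\F^{-1}M_t)(x)=K_t^0(x)\,t\sum_{k\in\N_0}
\frac{[t(\alpha/2)^2]^k}{(\frac{3}{2})_k}\,
L_k\!\left(\frac{r^2}{4t}\right),
\qquad r^2=(x^1)^2+(x^2)^2,
\end{equation*}
which genuinely depends on $r$: the order-$\alpha^2$ term carries $L_1(r^2/4t)=1-r^2/(4t)$, whereas the lemma's formula has $1$ in its place. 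Equivalently, the convolution kernel of $\e^{t\Delta}(-D^+D^-)$ is $-(\nabla_1^2+\nabla_2^2)K_t^0=(\frac{1}{t}-\frac{r^2}{4t^2})K_t^0$, not $\frac{1}{t}K_t^0$. The obstruction is structural: for $\alpha>0$ the symbol $M_t$ is axially, not spherically, symmetric, so its inverse Fourier transform cannot have the spherically symmetric form $K_t^0(x)\times(\text{constant in }x)$; your representation agrees with the asserted one exactly on the axis $r=0$, where every $L_k(0)=1$. Your own cross-check detects this once pushed to the second coupled equation: $\pd_tk_t=\Delta k_t+(-\alpha^2D^+D^-+\beta^2\id_\Hs)l_t$ is incompatible with $k_t=K_t^0a_t$, $l_t=K_t^0b_t$ and $a_t,b_t$ independent of $x$, because $D^+D^-K_t^0$ is not a multiple of $K_t^0$. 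Compare this with the paper's own proof: there the inverse Fourier integral is computed in spherical coordinates and \eqref{eq:ints} is applied with one and the same angle $\theta$ appearing both in $(\xi^+\xi^-)^m=\rho^{2m}\sin^{2m}\theta$ (polar angle from the $\xi^3$-axis) and in the phase $\e^{\img\abs{x}\rho\cos\theta}$ (angle between $\xi$ and $x$); these two angles coincide only when $x$ lies on the third coordinate axis, so that computation yields precisely the $r=0$ value your formula gives, extended to all $x$ as if the kernel were radial. In short, your cylindrical decomposition is the one adapted to the true symmetry of the symbol; the Laguerre polynomials are not an obstacle to be resummed away but the actual off-axis behaviour of the kernel, and the identity you would need, $\sum_kz^kL_k(y)/(\frac{3}{2})_k=\sum_kz^k/(\frac{3}{2})_k$ for all $y\geq0$, already fails at order $z$.
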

\begin{proof}[Proof of lemma~\ref{lem:lE1}]
Throughout, $\norm{\cdot}_p$ is the norm in $L^p(\R^3)$ 
for $1\leq p<\infty$; in particular $\norm{\cdot}_2=\norm{\cdot}_\Hs$.
Let $p_t(-\img\nabla)$ denote the operator in \eqref{eq:bbb1}. 
First we show that $p_t(-\img\nabla)$ defines a continuous mapping 
$\D_b(D^+D^-)\lto L^2(\R^3)$.

Using $\cosh(\cdot)=\slim_{n\nearrow\infty}
\sum_{k=0}^n(\cdot)^{2k}/(2k)!$
on $\D_b(D^+D^-)$, and the binomial formula, we have for 
$u\in\D_b(D^+D^-)$ 
\begin{align*}
\norm{p_t(-\img\nabla)u}_2=&
\norm{\sum_{n\in\N_0}\frac{(\beta t)^{2n}}{(2n)!}
\sum_{m=0}^n\binom{n}{m}(-1)^m
(\frac{\alpha}{\beta})^{2m}\e^{t\Delta}u_m}_2 
\\
\leq&\sum_{n\in\N_0}\frac{(\beta t)^{2n}}{(2n)!}
\sum_{m=0}^n\binom{n}{m}
(\frac{\alpha}{\beta})^{2m}\norm{\e^{t\Delta}u_m}_2
\end{align*}
where $u_m=(D^+D^-)^mu$. Since $K_t^0\in L^1(\R^3)$, 
we have by Young inequality
\begin{equation}
\norm{\e^{t\Delta}u_m}_2=
\norm{K_t^0*u_m}_2\leq \norm{K_t^0}_1\,\norm{u_m}_2\,.
\label{eq:youngu}
\end{equation}
But $\norm{u_m}_2\leq C_u^m$ for some finite $C_u>0$, 
and $\norm{K_t^0}_1=1$, so 
\begin{align*}
\norm{p_t(-\img\nabla)u}_2\leq& 
\sum_{n\in\N_0}\frac{(\beta t)^{2n}}{(2n)!}
\sum_{m=0}^n\binom{n}{m}
(\frac{\alpha}{\beta})^{2m}C_u^m 
=\cosh(t\sqrt{\alpha^2C_u+\beta^2})
\end{align*}
showing that $p_t(-\img\nabla)$ is a continuous linear 
operator from $\D_b(D^+D^-)$ into $L^2(\R^3)$. Since 
$\D_b(D^+D^-)$ is dense in $L^2(\R^3)$, there exists the 
unique bounded operator in $L^2(\R^3)$ that extends  
$p_t(-\img\nabla)$.

Next we compute $k_t$. Since $p_t(-\img\nabla)u=k_t*u$ for 
$u\in\D_b(D^+D^-)$, we have
\begin{equation}
k_t=p_t(-\img\nabla)\delta
\label{eq:KtE}
\end{equation} 
in $\Ds^\prime(\R^3)$, where $\delta$ is the Dirac distribution.
Let $\F_{x\lmap\xi}\co\Hs\lto\Hs$ be the Fourier transform; 
$\xi=(\xi^1,\xi^2,\xi^3)\in\R^3$. In the sense of 
Plancherel--Riesz theorem, for $u\in \Hs$
\begin{equation}
\F[D^\pm u]=\img\xi^\pm\hat{u}\,, \quad 
\xi^\pm=\xi^1\pm\img\xi^2\,, \quad 
\hat{u}=\F[u]\,.
\label{eq:fE4}
\end{equation}
Using \eqref{eq:fE4}, $p_t(-\img\nabla)$ has the symbol
\begin{equation}
p_t(\xi)=\e^{-t\xi\cdot\xi}
\cosh(t\sqrt{\alpha^2\xi^+\xi^-+\beta^2})\,.
\label{eq:fE5-0}
\end{equation}
The dot product denotes the standard scalar product 
of vectors in $\R^3$. Since the symbol $p_t\in L^1(\R^3)$, 
relation \eqref{eq:KtE} gives
\begin{equation}
k_t(x)=\frac{1}{(2\pi)^3}\int_{\R^3}
\e^{\img x\cdot\xi}p_t(\xi)\dd\xi \quad
\text{a.e. $x\in\R^3$\,.}
\label{eq:fE5}
\end{equation}
Rewrite $\xi$ in \eqref{eq:fE5-0} and \eqref{eq:fE5} 
in spherical coordinates $(\abs{\xi}=\rho,\theta,\phi)$ 
and use the series representation of $\cosh$. Then, expand 
$(\alpha^2\xi^+\xi^-+\beta^2)^n$ using binomial formula, 
apply the relations 
\begin{subequations}\label{eq:ints}
\begin{align}
&\int_0^\pi\e^{\img \abs{x}\rho\cos\theta}
\sin^{2m+1}\theta \dd\theta
=\frac{\sqrt{\pi}\,m!}{\Gamma(m+\frac{3}{2})}
\,_0F_1(;m+\frac{3}{2};-\frac{\abs{x}^2\rho^2}{4})\,, \\
&\int_0^\infty\rho^{m+1/2}\e^{-t\rho}\,
_0F_1(;m+\frac{3}{2};-\frac{\abs{x}^2\rho}{4})\dd\rho
=\e^{-\frac{\abs{x}^2}{4t}}t^{-m-3/2}\Gamma(m+\frac{3}{2})
\end{align}
\end{subequations}
($m\in\N_0$), and get that
\[k_t(x)=\frac{\e^{-\frac{\abs{x}^2}{4t}}a_t}{(4\pi t)^{3/2}}\,, 
\quad
a_t=\sum_{n\in\N_0}\frac{(\beta t)^{2n}}{(2n)!}\sum_{m=0}^n
\binom{n}{m}(\frac{\alpha}{\beta})^{2m}m!t^{-m}\,.\]

It remains to compute $a_t$. The formal double series
\[\sum_{n\in\N_0}\sum_{m=0}^n\Omega_{nm}
=\sum_{n\in\N_0}\sum_{m=n}^\infty \Omega_{mn}\]
provided that the sum exists. Using in addition that 
$1/\Gamma(k)=0$ for $-k\in\N_0$, we get 
\begin{align*}
a_t=&\sum_{n,m\in\N_0}\frac{(\beta t)^{2m}}{(2m)!}
(\frac{\alpha}{\beta})^{2n}t^{-n}\frac{m!}{\Gamma(m-n+1)} \\
=&\sum_{n,m\in\N_0}\frac{[(\alpha/\beta)^2/t]^n}{n!}
\frac{[(\beta t)^2]^m}{m!}
\left[\frac{(1)_n(1)_m(1)_m}{(1)_{2m}(1)_{m-n}}\right]\,.
\end{align*}
Using Legendre's duplication formula 
$(1)_{2m}=4^m(\frac{1}{2})_m(1)_m$ and 
then replacing $m$ by $n+p$ for $p\in\N_0$ 
(since $1/(1)_p=0$ for $-p\in\N$) we get
\begin{align*}
a_t=&
\sum_{n,p\in\N_0}\frac{[(\alpha/\beta)^2/t]^n}{n!}
\frac{[(\beta t)^2]^{n+p}}{(n+p)!}
\left[\frac{4^{-n-p}(1)_{n+p}(1)_n}{
(\frac{1}{2})_{n+p}(1)_p}\right] \\
=&\sum_{n,p\in\N_0}\frac{[t(\alpha/2)^2]^n}{n!}
\frac{[(\beta t/2)^2]^{p}}{p!}
\left[\frac{(1)_n}{(\frac{1}{2})_{n+p}}\right]
\end{align*}
which is the confluent Humbert function 
$\Phi_3(1;1/2;t(\alpha/2)^2,(\beta t/2)^2)$.
This completes the proof of the lemma.
\end{proof}
\begin{proof}[Proof of lemma~\ref{lem:lE2}]
Let $p_t(-\img\nabla)$ denote the operator in \eqref{eq:bbb2}. 
First we show that $p_t(-\img\nabla)$ defines a continuous mapping 
$\D_{b}(D^+D^-)\lto L^2(\R^3)$. 

Using $\sinh(\cdot)=\slim_{n\nearrow\infty}
\sum_{k=0}^n(\cdot)^{2k+1}/(2k+1)!$ on $\D_b(D^+D^-)$, 
and the binomial formula, we have for $u\in\D_b(D^+D^-)$ 
\begin{align*}
\norm{p_t(-\img\nabla)u}_2=&
\norm{\sum_{n\in\N_0}\frac{(\beta t)^{2n}t}{
(2n+1)!}\sum_{m=0}^n\binom{n}{m}(-1)^m
(\frac{\alpha}{\beta})^{2m}\e^{t\Delta}u_m}_2 \\
\leq&t\sum_{n\in\N_0}\frac{(\beta t)^{2n}}{(2n+1)!}
\sum_{m=0}^n\binom{n}{m}
(\frac{\alpha}{\beta})^{2m}\norm{\e^{t\Delta}u_m}_2
\end{align*}
where $u_m=(D^+D^-)^mu$. It was shown in the proof 
of lemma~\ref{lem:lE1} that $\norm{\e^{t\Delta}u_m}_2\leq C_u^m$ 
for some finite $C_u>0$, so 
\[\norm{p_t(-\img\nabla)u}_2\leq 
\frac{\sinh(t\sqrt{\alpha^2C_u+\beta^2})}{
\sqrt{\alpha^2C_u+\beta^2}}\]
implying that $p_t(-\img\nabla)$ is a continuous linear operator from 
$\D_b(D^+D^-)$ into $L^2(\R^3)$. Since $\D_b(D^+D^-)$ is dense in 
$L^2(\R^3)$, there exists the unique bounded operator in $L^2(\R^3)$ 
that extends $p_t(-\img\nabla)$.

The computation of $l_t$ is pretty much the same as that 
of $k_t$ in lemma~\ref{lem:lE1}. In this case the symbol of 
$p_t(-\img\nabla)$ is 
\[p_t(\xi)=\e^{-t\xi\cdot\xi}
\frac{\sinh(t\sqrt{\alpha^2\xi^+\xi^-+\beta^2})}{
\sqrt{\alpha^2\xi^+\xi^-+\beta^2}}\,,
\quad p_t\in L^1(\R^3)\,.\]
Using the series representation of $\sinh$, 
\eqref{eq:fE5}, and \eqref{eq:ints}, we get
\[l_t(x)=\frac{\e^{-\frac{\abs{x}^2}{4t}}b_t}{(4\pi t)^{3/2}}\,, 
\quad 
b_t=t\sum_{n\in\N_0}\frac{(\beta t)^{2n}}{(2n+1)!}\sum_{m=0}^n
\binom{n}{m}(\frac{\alpha}{\beta})^{2m}m!t^{-m}\,.\]
Proceeding exactly the same way as when equating $a_t$ in 
the proof of lemma~\ref{lem:lE1}, we get 
$b_t=t\Phi_3(1;3/2;t(\alpha/2)^2,(\beta t/2)^2)$
(the only difference is that now 
$(2m+1)!=(2)_{2m}=4^m(\frac{3}{2})_m(1)_m$).
The proof of the lemma is accomplished.
\end{proof}
We are in a position to accomplish the proof of the theorem.
\begin{proof}[Proof of theorem~\ref{thm:lE}]
Both $\e^{-th^0}\vert_\D$ and $\e^{-tU}\vert_\D$ are 
the strong limits of their Taylor series. Since $\D$ is 
invariant under the action of $\e^{-th^0}$ 
(because of \eqref{eq:youngu}) and $\e^{-tU}$ (by definition),
the operators $\e^{-th^0}$ and $\e^{-tU}$ commute on $\D$, 
and it holds
\begin{equation}
\e^{-th}\vert_\D=\e^{-th^0}\e^{-tU}\vert_\D\,.
\label{eq:e-1}
\end{equation}
Next, using
\[(S^\pm)^n=0\quad(n\in\N_{\geq2})\,, \quad 
S^+S^-+S^-S^+=\id_{\C^2}\,, \quad
S^\pm S^3+S^3S^\pm=0\]
we have by induction 
\[U^{2n}(u\ot\ket{s})
=(-\alpha^2D^+D^-+\beta^2\id_\Hs)^nu\ot\ket{s}\]
for $n\in\N_0$ and $u\ot\ket{s}\in\D$. 
Therefore, by \eqref{eq:e-1}
\begin{align*}
\e^{-th}(u\ot\ket{s})=&
(\e^{t\Delta\vert_{H^2(\R^3)}}\odot\id_{\C^2})
\sum_{n\in\N_0}\frac{(-t)^n}{n!}U^n(u\ot\ket{s}) 
\\
=&\sum_{n\in\N_0}\frac{t^{2n}}{(2n)!}
\e^{t\Delta\vert_{H^2(\R^3)}}
(-\alpha^2D^+D^-+\beta^2\id_\Hs)^nu\ot\ket{s} 
\\
&+\sum_{n\in\N_0}\frac{(-t)^{2n+1}}{(2n+1)!} 
[\e^{t\Delta\vert_{H^2(\R^3)}}(-\alpha^2D^+D^-+
\beta^2\id_\Hs)^n\odot\id_{\C^2}]U(u\ot\ket{s})\,.
\end{align*}
The first series is $[\text{equation~}\eqref{eq:bbb1}]u\ot\ket{s}$ 
and the second one is 
$-([\text{equation~}\eqref{eq:bbb2}]\odot\id_{\C^2})U(u\ot\ket{s})$.
Using the definition of $U$, and then applying
lemmas~\ref{lem:lE1} and \ref{lem:lE2}, and 
\[D^\pm l_t(x)=-\frac{x^\pm}{2t}l_t(x) \quad 
\text{a.e. $x\in\R^3$}\]
one deduces \eqref{eq:kerh-1} and \eqref{eq:kerh-2}. 
The proof of the theorem is complete.
\end{proof}
\begin{proof}[Proof of corollary~\ref{cor:lE}]
For $t\in\R\setm\{0\}$, $\epsilon>0$, $u\ot\ket{s}\in\D$, 
we have by theorem~\ref{thm:lE}
\[\norm{(\e^{-\img(t-\img\epsilon)h}-\e^{-\img th})
(u\ot\ket{s})}_\Hl
=\norm{\sum_{s^\prime}\varphi_\epsilon^{s^\prime s}
\ot\ket{s^\prime}}_\Hl
=\left(\sum_{s^\prime}
\norm{\varphi_\epsilon^{s^\prime s}}_\Hs^2\right)^{1/2}\]
where 
\[\varphi_\epsilon^{s^\prime s}=
G_\epsilon^{s^\prime s}*u-\delta_{s^\prime s}u\,.\]
Further, using $\lim_{t\lto0}a_t=1$ and $\lim_{t\lto0}b_t=0$
\[\lim_{\epsilon\searrow0}
\norm{\varphi_\epsilon^{s^\prime s}}_\Hs
=\delta_{s^\prime s}\lim_{\epsilon\searrow0}
\norm{K_\epsilon^0*u-u}_\Hs\,.\]
But
\begin{align*}
\lim_{\epsilon\searrow0}\norm{K_\epsilon^0*u-u}_\Hs
=&\lim_{\epsilon\searrow0}
\norm{\e^{\epsilon\Delta}u-u}_\Hs
=\lim_{\epsilon\searrow0}
\epsilon\norm{\frac{1}{\epsilon}
(\e^{\epsilon\Delta\vert_{H^2(\R^3)}}
-\id_\Hs)u}_\Hs 
\\
=&\lim_{\epsilon\searrow0}\epsilon\norm{\Delta u}_\Hs=0
\end{align*}
and hence
\[\e^{-\img th}(u\ot\ket{s})=\slim_{\epsilon\searrow0}
\e^{-\img(t-\img\epsilon)h}(u\ot\ket{s})=
\slim_{\epsilon\searrow0}
\sum_{s^\prime}
(G_{\img(t-\img\epsilon)}^{s^\prime s}*u)\ot\ket{s^\prime}\,.\]
Now, by standard argument, instead of $u$, consider the function 
$u_R=1_{\{x\vrt\abs{x}\leq R\}}u$. Then
$u_R\in\D_b(D^+D^-)\cap L^1(\R^3)$ and one deduces \eqref{eq:ker}.
\end{proof}
\section{Green function for the two-particle operator}
In this section we study the resolvent of the two-particle 
operator $H$ in \eqref{eq:H0} written in the center-of-mass 
coordinate system.
\subsection{Bases}
In what follows we find it convenient to identify $\C^2\ot\C^2$ with 
$\C^3\op\C^1$, which basically says that the tensor product 
$\mathbf{2}\ot\mathbf{2}$ of $SU_2$-irreducible 
representations of dimensions $2s+1=2$ reduces to the orthogonal sum 
$\mathbf{3}\op\mathbf{1}$ of $SU_2$-irreducible representations of 
dimensions $2s+1=3$ and $2s+1=1$. 
The basis of the space of the representation $\mathbf{3}$ is 
$(\ket{1s})_{s\in\{-1,0,1\}}$, while 
the space of $\mathbf{1}$ is single-dimensional, with 
the basis $\ket{00}$. This follows from the fact that the basis of 
$\C^3\op\C^1$
\[(\ket{\sigma})_{\sigma\in\bas}\,, \quad
\bas=\bigcup_{S\in\{0,1\}}\{(S,s)\vrt 
s\in\{-S,-S+1,\ldots,S\}\}\]
is orthonormal and is related to the orthonormal basis
\[(\ket{s_1s_2}=\ket{s_1}\ot\ket{s_2})_{(s_1,s_2)\in
\{-\frac{1}{2},\frac{1}{2}\}^2}\]
of $\C^2\ot\C^2$ via the Clebsch--Gordan coefficient of 
$SU_2$ \cite{Jucys}:
\begin{equation}
\ket{Ss}=\sum_{s_1,s_2}
\cgc{\frac{1}{2}}{\frac{1}{2}}{S}{s_1}{s_2}{s}\ket{s_1s_2}
\label{eq:cgcrel}
\end{equation}
where the sum runs over 
$(s_1,s_2)\in\{-\frac{1}{2},\frac{1}{2}\}^2$ 
such that $s_1+s_2=s$. Using \eqref{eq:cgcrel} and the 
orthogonality condition for the Clebsch--Gordan coefficient, 
one represents the basis vector $\ket{s_1s_2}$ of $\C^2\ot\C^2$ 
as a linear combination of the basis vectors $\ket{\sigma}$ 
of $\C^3\op\C^1$, with $\sigma$ ranging over $\bas$.

Given the orthonormal basis $(e_i)$ of $\Hs$, 
the orthonormal basis of $\Hl$ is $(e_i\ot\ket{s})$ 
and the orthonormal basis of $\Hl\ot\Hl$ is 
$((e_i\ot\ket{s_1})\ot (e_j\ot\ket{s_2}))$. The Hilbert space 
\[\K=(\Hs\ot\Hs)\ot(\C^3\op\C^1)\]
has orthonormal basis $(e_{ij}\ot\ket{\sigma})$, 
with $e_{ij}=e_i\ot e_j$, $\sigma=(S,s)\in\bas$. The map 
$J\co \Hl\ot\Hl\lto\K$ given by
\[J\co (e_i\ot\ket{s_1})\ot (e_j\ot\ket{s_2})\lmap
\sum_\sigma
\cgc{\frac{1}{2}}{\frac{1}{2}}{S}{s_1}{s_2}{s}
e_{ij}\ot\ket{\sigma}\]
is unitary, and so $\K$ is isomorphic to $\Hl\ot\Hl$.
\subsection{Center-of-mass coordinate system}
Let $x_1\in\R^3$, $x_2\in\R^3$ be the position-vectors 
of the two atoms, and put $q=(x_1,x_2)$ and $Q=(x,X)$, 
where $x=x_1-x_2$ and $X=(x_1+x_2)/2$. Then
\[Kq=Q\,, \quad 
K=\begin{pmatrix}
1 & -1 \\ \frac{1}{2} &
\frac{1}{2}
\end{pmatrix}\,.\]
The coordinate transformation $K$ gives rise to the 
unitary transformation $\mathcal{U}$ in the Hilbert space 
$\Hs^c=L^2(\R^6)$:
\[\mathcal{U}\co\Hs^c\lto\Hs^c\,, \quad 
f(q)\lmap f(Kq)\,.\]
Now consider the unitary isomorphisms 
$\tau\co\Hs\ot\Hs\lto\Hs^c$, $e_{ij}\lmap e_i(x_1)e_j(x_2)$, and 
\[\tilde{\tau}=\mathcal{U}\tau\ot\id_{\C^3\op\C^1}\co 
\K\lto\K^c=\Hs^c\ot(\C^3\op\C^1)\,.\]
Note that $\mathcal{U}\tau\in\B(\Hs\ot\Hs,\Hs^c)$ is bounded, so 
$(\mathcal{U}\tau\ot\id_{\C^3\op\C^1})^*=
(\mathcal{U}\tau)^*\ot\id_{\C^3\op\C^1}$. Define 
\begin{equation}
R^c(z)=LR(z)L^*\,, \quad 
z\in\res H=\C\setm[-2\Sigma,\infty)
\label{eq:Rcz}
\end{equation}
where the unitary map
\begin{align*}
&L=\tilde{\tau}J\co\Hl\ot\Hl\lto\K^c\,, 
\\
&[(e_i\ot\ket{s_1})\ot (e_j\ot\ket{s_2})]\lmap
\sum_{S,s}\cgc{\frac{1}{2}}{\frac{1}{2}}{S}{s_1}{s_2}{s}
e_i(x)e_j(X)\ot\ket{Ss}\,.
\end{align*}
Note that $J\in\B(\Hl\ot\Hl,\K)$, 
so $(\tilde{\tau}J)^*=J^*\tilde{\tau}^*$.
The bounded operator $R^c(z)\in\B(\K^c)$ is the resolvent 
$(H^c-z\id_{\K^c})^{-1}$ of the self-adjoint operator
\[H^c=LHL^*\,, \quad \dom H^c=L\dom H\]
which represents the two-particle Hamiltonian \eqref{eq:H0} 
written in the center-of-mass coordinate system $Q$. 
The operator $H$ is unitarily equivalent to
\eqref{eq:Op-ABD}, and the unitary self-map is given by 
$J^*[\tau^*\mathcal{U}\tau\ot\id_{\C^3\op\C^1}]J$.
\subsection{Resolvent}
By \eqref{eq:resT-2}, \eqref{eq:ker}, and \eqref{eq:Rcz}, 
the resolvent $R^c(z)$ acts on 
$f\ot\ket{\sigma}\in\Hs^c\ot(\C^3\op\C^1)$ as follows:
\begin{equation}
R^c(z)(f(Q)\ot\ket{\sigma})=\sum_{\sigma^\prime}
\LIM\int R_{\sigma^\prime\sigma}^c(z)(Q-Q^\prime)
f(Q^\prime)\dd Q^\prime\ot\ket{\sigma^\prime}\,.
\label{eq:Hresz-c-2}
\end{equation}
For a.e. $Q=(x,X)\in\R^6$, the element 
$R_{\sigma^\prime\sigma}^c(z)(Q)$ 
of the Green function is given by the improper Riemann integral 
($\int_0^\infty\equiv\lim_{\epsilon\searrow0}\int_\epsilon^\infty$)
\begin{equation}
R_{\sigma^\prime\sigma}^c(z)(x,X)
=\pm\img\int_0^\infty\e^{\pm\img tz}
\sum_{\substack{s_1,s_2 \\ s_1^\prime,s_2^\prime}}
\cgc{\frac{1}{2}}{\frac{1}{2}}{S}{s_1}{s_2}{s}
\cgc{\frac{1}{2}}{\frac{1}{2}}{
S^\prime}{s_1^\prime}{s_2^\prime}{s^\prime} 
G_{\pm\img t}^{s_1^\prime s_1}(x)
G_{\pm\img t}^{s_2^\prime s_2}(X)\dd t
\label{eq:HresProj-c-2}
\end{equation}
with the upper (lower) sign taken for $\Im z>0$ ($\Im z<0$). 
Relations~\eqref{eq:Hresz-c-2} and \eqref{eq:HresProj-c-2} 
follow from the fact that the integrand $(\cdot)$ in 
\eqref{eq:HresProj-c-2} is not Lebesgue summable,
since $(\cdot)$ is of the form 
$\exp(\pm\img [tz+Q^2/(4t)])
t^{-3}P_{\pm\img t}^{\sigma^\prime\sigma}(Q)$ with
some $P_{\pm\img t}^{\sigma^\prime\sigma}$ containing the powers 
$t^n$ for $n\in\Z$. Therefore, in order to apply the Fubini 
theorem for changing the order of integration with respect 
to $Q^\prime$ and $t$, one introduces the factor $\e^{-\epsilon/t}$ 
so that $\int_0^\infty\e^{-\epsilon/t}(\cdot)\dd t
=\int_\epsilon^\infty(\cdot)\dd t$ as $\epsilon\searrow0$. 
The latter holds true because $\e^{-\epsilon/t}t^n\lto0$ as $t\lto0$ 
for all $n\in\Z$, which further amounts to 
$\int_0^\epsilon\e^{-\epsilon/t}(\cdot)\dd t\lto0$ as 
$\epsilon\searrow0$
(see also the proof of proposition~\ref{prop:RcAlpha0}).

We compute \eqref{eq:HresProj-c-2} for $\alpha\geq0$ small.
\begin{prop}\label{prop:RcAlpha0}
For $\Im z\neq0$ and $\alpha=0$
\begin{align}
R_{\sigma^\prime\sigma}^c(z)(Q)=&
-\frac{\delta_{\sigma^\prime\sigma}}{8\pi^3Q^2}
\left(\delta_{s0}zK_2(\abs{Q}\sqrt{-z}) 
+\delta_{S1}[\delta_{s1}(z-2\beta)
K_2(\abs{Q}\sqrt{2\beta-z}) \right. 
\nonumber \\
&\left.+\delta_{s,-1}(z+2\beta)
K_2(\abs{Q}\sqrt{-2\beta-z})]\right)
\label{eq:Rca0}
\end{align}
for a.e. $Q\in\R^6$. If in addition $\beta=0$ then
\begin{equation}
R_{\sigma^\prime\sigma}^c(z)(Q)=
-\delta_{\sigma^\prime\sigma}
\frac{zK_2(\abs{Q}\sqrt{-z})}{8\pi^3Q^2}\,.
\label{eq:HresProj-c-alphabeta0}
\end{equation}
\end{prop}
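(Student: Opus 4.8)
The statement to prove is Proposition~\ref{prop:RcAlpha0}, which evaluates the Green-function elements $R_{\sigma'\sigma}^c(z)(Q)$ in closed form when $\alpha=0$ (and then $\alpha=\beta=0$). The plan is to start from the integral representation \eqref{eq:HresProj-c-2} and specialize it to $\alpha=0$, where the propagator kernels $G_{\pm\img t}^{s's}$ collapse dramatically. Indeed, when $\alpha=0$ we have $b_t$ multiplying the off-diagonal $x^\pm$-terms in \eqref{eq:kerh-2}, but the spin-flip contributions carry an explicit factor $\alpha/(2t)$, so they vanish; hence $G_{\pm\img t}^{s's}(x)=\delta_{s's}K_{\pm\img t}^0(x)(a_{\pm\img t}-2\beta s\,b_{\pm\img t})$. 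From Remark~\ref{rem:rem1}(3) read at $\alpha=0$, $a_{\img t}=\cos(\beta t)$ and $b_{\img t}=(\img/\beta)\sin(\beta t)$, so $a_{\img t}-2\beta s\,b_{\img t}=\cos(\beta t)-2\img s\sin(\beta t)=\e^{-2\img s\beta t}$ (using $s\in\{\pm\tfrac12\}$, so $2s=\pm1$). Thus each diagonal kernel becomes a free heat kernel multiplied by a pure phase $\e^{\mp 2\img s\beta t}$ — the spin-orbit coupling is gone and the two factors decouple.

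First I would substitute these simplified kernels into \eqref{eq:HresProj-c-2}. Since the kernels are now diagonal in the single-particle spin indices, the Clebsch--Gordan sum reduces: for fixed $(S,s)$ and $(S',s')$ only $s_1=s_1'$, $s_2=s_2'$ survive, and by orthonormality of the CG coefficients the result is diagonal, $\propto\delta_{\sigma'\sigma}$. The product of the two phases is $\e^{\mp 2\img(s_1+s_2)\beta t}=\e^{\mp 2\img s\beta t}$ since $s_1+s_2=s$. So the integrand becomes $\e^{\pm\img tz}\,\e^{\mp 2\img s\beta t}\,K_{\pm\img t}^0(x)K_{\pm\img t}^0(X)$. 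Writing $K_{\pm\img t}^0(x)K_{\pm\img t}^0(X)=(4\pi(\pm\img t))^{-3}\e^{-Q^2/(4(\pm\img t))}$ with $Q^2=\abs{x}^2+\abs{X}^2=\abs{Q}^2$, the whole thing is a one-dimensional integral in $t$ of the form $\pm\img\int_0^\infty\e^{\pm\img t(z-2s\beta)}(4\pi(\pm\img t))^{-3}\e^{\mp Q^2/(4\img t)}\dd t$.

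The computational core is then the evaluation of this single integral. I would fold the three spin sectors into one template: for each allowed $(S,s)$ the effective "energy shift" is $2s\beta$, giving $z-2s\beta$, which produces the three arguments $\sqrt{-z}$ (for $s=0$), $\sqrt{2\beta-z}$ (for $s=1$), and $\sqrt{-2\beta-z}$ (for $s=-1$) appearing in \eqref{eq:Rca0}. The integral $\int_0^\infty t^{-3}\e^{\img w t}\e^{\img Q^2/(4t)}\dd t$ (regularized by the $\e^{-\epsilon/t}$ factor described in the text, to make Fubini and the improper Riemann integral legitimate) is a standard Macdonald-function integral: it yields a modified Bessel function $K_2$ of argument $\abs{Q}\sqrt{-w}$ up to the prefactor $w/Q^2$, via the representation $\int_0^\infty t^{-\nu-1}\e^{-p t-q/t}\dd t=2(p/q)^{\nu/2}K_\nu(2\sqrt{pq})$ after the appropriate Wick rotation. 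Carefully tracking the branch of the square root (so that $K_2$ decays) and the overall constants — the $(4\pi)^{-3}$, the $\pm\img$, the $(\pm\img)^{-3}$, and the factor of $w$ pulled out to match the $z$, $z-2\beta$, $z+2\beta$ numerators — is where the bookkeeping must be done with care, and is the main obstacle. Finally, setting $\beta=0$ collapses all three sectors to the single $s=0$ case with argument $\sqrt{-z}$ and combines the $\delta_{s0}$, $\delta_{S1}\delta_{s,\pm1}$ contributions into one $\delta_{\sigma'\sigma}$ term, yielding \eqref{eq:HresProj-c-alphabeta0}.

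The step I expect to be the real obstacle is the regularization and constant-tracking in the Bessel integral: one must justify the interchange of $\dd Q'$ and $\dd t$ integration (handled by the $\e^{-\epsilon/t}$ device flagged before the proposition), confirm that the regularized integral converges to the $K_2$ expression as $\epsilon\searrow0$, and get every factor of $\img$, $\pi$, and the sign of the square root exactly right so that the $\Im z>0$ and $\Im z<0$ computations merge into a single $\beta$-dependent formula valid for all $\Im z\neq0$.
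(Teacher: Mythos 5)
Your proposal is correct and follows essentially the same route as the paper's proof: specialize the kernels at $\alpha=0$ to $G_{\img t}^{s's}=\delta_{s's}K_{\img t}^0\,\e^{-2\img s\beta t}$, use Clebsch--Gordan orthogonality to reduce the spin sum to $\delta_{\sigma'\sigma}$ with the combined phase $\e^{\mp2\img s\beta t}$, and evaluate the resulting $t$-integral as a Macdonald-function integral (the paper's \eqref{eq:tm3} with $n=0$, justified by exactly the $\e^{-\epsilon/t}$-regularization you describe). The step you flag as the main obstacle --- the interchange of integrations and the $\epsilon\searrow0$ limit --- is precisely what the paper handles via its $I_{\epsilon,\delta}^{(1)}+I_{\epsilon,\delta}^{(2)}$ splitting, and your constant-tracking via $\int_0^\infty t^{-\nu-1}\e^{-pt-q/t}\dd t=2(p/q)^{\nu/2}K_\nu(2\sqrt{pq})$ reproduces the paper's prefactors correctly.
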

($K_\nu$ is the Macdonald function of order $\nu$.) 
More generally, we have proposition~\ref{prop:RcDiagAlphaSmall}.
\begin{prop}\label{prop:RcDiagAlphaSmall}
For $\Im z\neq0$ and $\alpha\geq0$ arbitrarily small, 
the diagonal element
\begin{equation}
R_{\sigma\sigma}^c(z)=
R_{\sigma\sigma}^c(z)_{\alpha=0}+\alpha^2\Delta_\sigma(z)
+O(\alpha^4)
\label{eq:RczalphaSmall}
\end{equation}
where $R_{\sigma\sigma}^c(z)_{\alpha=0}$ is given by \eqref{eq:Rca0}, 
and the correction term
\begin{align}
\Delta_\sigma(z)(Q)=&\frac{\mp\img}{8\pi^3\beta^2\abs{Q}^3}
\Biggl(
\delta_{s0}[z^{3/2}K_3(\abs{Q}\sqrt{-z}) 
-\frac{1}{2}(z+2\beta)^{3/2}K_3(\abs{Q}\sqrt{-2\beta-z}) 
\nonumber \\
&-\frac{1}{2}(z-2\beta)^{3/2}K_3(\abs{Q}\sqrt{2\beta-z})
\mp\img(-1)^S\frac{x^-X^++x^+X^-}{8\abs{Q}} 
\nonumber \\
&\cdot(2z^2K_4(\abs{Q}\sqrt{-z})-(z+2\beta)^2
K_4(\abs{Q}\sqrt{-2\beta-z}) 
\nonumber \\
&-(z-2\beta)^2K_4(\abs{Q}\sqrt{2\beta-z}))] 
\nonumber \\
&+\delta_{S1}\{\delta_{s1}[(z-2\beta)^{3/2}
K_3(\abs{Q}\sqrt{2\beta-z})
-z^{3/2}K_3(\abs{Q}\sqrt{-z}) 
\nonumber \\
&\pm\img\beta\abs{Q}(z-2\beta)K_2(\abs{Q}\sqrt{2\beta-z})] 
\nonumber \\
&+\delta_{s,-1}[(z+2\beta)^{3/2}K_3(\abs{Q}\sqrt{-2\beta-z}) 
-z^{3/2}K_3(\abs{Q}\sqrt{-z}) 
\nonumber \\
&\mp\img\beta\abs{Q}(z+2\beta)K_2(\abs{Q}\sqrt{-2\beta-z})]\}
\Biggr)
\label{eq:DeltaCorr}
\end{align}
for a.e. $Q=(x,X)\in\R^6$; the upper (lower) sign is taken 
when $\Im z>0$ ($\Im z<0$). When in addition $\beta=0$, 
one assumes the limit $\beta\searrow0$ in \eqref{eq:DeltaCorr}.
\end{prop}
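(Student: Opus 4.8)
The plan is to insert the explicit single-particle kernels \eqref{eq:kerh-2} into the master formula \eqref{eq:HresProj-c-2} for the diagonal element $R_{\sigma\sigma}^c(z)$ (so $\sigma^\prime=\sigma$, i.e. $S^\prime=S$ and $s^\prime=s$), and then expand in $\alpha$ using \eqref{eq:atbt-small}. Write $G_{\pm\img t}^{s^\prime s}=G_{\mathrm{d}}^{s^\prime s}+G_{\mathrm{f}}^{s^\prime s}$, where the \emph{diagonal} (spin-preserving) part $G_{\mathrm{d}}^{s^\prime s}=K_{\pm\img t}^0\,\delta_{s^\prime s}(a_{\pm\img t}-2\beta s\,b_{\pm\img t})$ carries no explicit power of $\alpha$, while the \emph{spin-flip} part $G_{\mathrm{f}}^{s^\prime s}$ is proportional to $\alpha\,b_{\pm\img t}/t$. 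In the product $G_{\pm\img t}^{s_1^\prime s_1}(x)\,G_{\pm\img t}^{s_2^\prime s_2}(X)$ the spin-flip$\times$spin-flip term is already $O(\alpha^2)$ and must be kept only at leading order in $b_{\pm\img t}$, whereas the diagonal$\times$diagonal term must be expanded to second order through the $\alpha^2$-corrections of $a_{\pm\img t}$ and $b_{\pm\img t}$ in \eqref{eq:atbt-small}; the diagonal$\times$spin-flip cross terms are $O(\alpha)$ but vanish identically, since a single flip violates the Clebsch--Gordan constraint $s_1^\prime+s_2^\prime=s_1+s_2=s$ that both the incoming and outgoing weights must obey, which is precisely why the expansion \eqref{eq:RczalphaSmall} carries no linear term. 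Collecting the two surviving sources is what defines $\Delta_\sigma(z)$.

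Next I would carry out the spin summation using the Clebsch--Gordan coefficients as in \eqref{eq:cgcrel}. The nonvanishing of $\cgc{\frac{1}{2}}{\frac{1}{2}}{S}{s_1}{s_2}{s}$ forces $s_1+s_2=s$ and $s_1^\prime+s_2^\prime=s$; since each factor $G_{\mathrm{f}}$ flips one spin, the spin-flip$\times$spin-flip contribution can return to the same projection $s$ only when the two flips are opposite, which requires $s=0$. This is exactly the origin of the $\delta_{s0}$ prefactor in front of the $x^-X^++x^+X^-$ term in \eqref{eq:DeltaCorr}: the two flip configurations $(s_1,s_2)=(-\tfrac12,\tfrac12)\to(\tfrac12,-\tfrac12)$ and $(\tfrac12,-\tfrac12)\to(-\tfrac12,\tfrac12)$ contribute the spatial factors $x^-X^+$ and $x^+X^-$, and their common Clebsch--Gordan weight in the $s=0$ channel is proportional to $(-1)^S$, which produces the sign factor. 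For $s=\pm1$ the two flips would change the total projection, so the spin-flip$\times$spin-flip term cannot contribute and only the diagonal$\times$diagonal $\alpha^2$-correction survives, yielding the $\delta_{S1}\delta_{s,\pm1}$ terms.

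With the spin algebra done, everything reduces to time integrals of the form $\pm\img\int_0^\infty\e^{\pm\img tz}K_{\pm\img t}^0(x)K_{\pm\img t}^0(X)\,C(\beta t)\,t^{-k}\dd t$, where $C$ is a product of $\cos(\beta t)$, $\sin(\beta t)$ (and, through the $\sin(\beta t)/(\beta t)$ piece of $b_{\pm\img t}$, a further inverse power of $t$), and $K_{\pm\img t}^0(x)K_{\pm\img t}^0(X)=\e^{\mp\img\abs{Q}^2/(4t)}/(4\pi(\pm\img t))^3$ is the six-dimensional heat kernel in $Q=(x,X)$. Converting $\cos,\sin$ into exponentials shifts the effective spectral parameter by $\pm2\beta$, which is what introduces the three arguments $\sqrt{-z}$, $\sqrt{2\beta-z}$, $\sqrt{-2\beta-z}$ into the Macdonald functions, while the inverse powers $t^{-k}$ fix the order, so that $k$ running over the relevant values produces $K_2$, $K_3$, and $K_4$. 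Each integral is then evaluated by the standard formula $\int_0^\infty t^{\mu-1}\e^{-pt-q/t}\dd t=2(q/p)^{\mu/2}K_{\mu}(2\sqrt{pq})$, exactly as in the proof of proposition~\ref{prop:RcAlpha0}, and assembling the pieces gives \eqref{eq:DeltaCorr} with the overall prefactor $\mp\img/(8\pi^3\beta^2\abs{Q}^3)$.

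The main obstacle is analytic rather than algebraic. As already noted before proposition~\ref{prop:RcAlpha0}, the integrand in \eqref{eq:HresProj-c-2} is not Lebesgue summable, since $P_{\pm\img t}^{\sigma^\prime\sigma}$ carries all integer powers $t^n$; hence the interchange of the $Q^\prime$- and $t$-integrations and the rotation of the time contour to imaginary argument are not automatic. I would handle this precisely as in that earlier proof, inserting the regularizing factor $\e^{-\epsilon/t}$, applying Fubini for $\epsilon>0$, and then passing to the limit $\epsilon\searrow0$ using $\e^{-\epsilon/t}t^n\to0$ as $t\to0$ for every $n\in\Z$. The two remaining delicate points are the consistent choice of branch for each $\sqrt{\cdot}$ according to $\Im z\gtrless0$ — this is what pins down the upper/lower signs in \eqref{eq:DeltaCorr} — and the finiteness of the $\beta\searrow0$ limit when $\beta=0$, where the apparent $\beta^{-2}$ singularity is cancelled by the vanishing of the bracketed combinations of Macdonald functions as their three arguments coalesce.
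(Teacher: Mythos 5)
Your proposal is correct and follows essentially the same route as the paper's proof: insert the kernels \eqref{eq:kerh-2} into \eqref{eq:HresProj-c-2}, carry out the Clebsch--Gordan spin summation (which, exactly as you argue, kills the $O(\alpha)$ cross terms and confines the flip--flip contribution with its $(-1)^S$ weight to the $s=0$ channel), expand via \eqref{eq:atbt-small}, exponentiate the trigonometric factors, and evaluate the resulting $t$-integrals with the Macdonald-function formula \eqref{eq:tm3} for $n\in\{0,1,2\}$, justified by the same $\e^{-\epsilon/t}$ regularization as in proposition~\ref{prop:RcAlpha0}. The paper compresses these steps into its intermediate identities \eqref{eq:1233444} and \eqref{eq:atbt-alphasmall}, but the substance is identical.
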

($X^\pm$ is defined similar to $x^\pm$.) 
For the purposes of the present paper, 
the computation of the diagonal elements of the two-particle 
Green function is sufficient. However, for completeness, 
we list non-diagonal elements in Appendix~\ref{sec:appC}.

We close the section by giving the proofs of the above propositions. 
\begin{proof}[Proof of proposition~\ref{prop:RcAlpha0}]
Using \eqref{eq:kerh-2} and \eqref{eq:atbt-small} 
\[G_{\img t}^{s^\prime s}(x)=\delta_{s^\prime s}
K_{\img t}^0(x)
(\cos(\beta t)-2\img s\sin(\beta t)) \quad 
\text{a.e. $x\in\R^3$}\]
and hence by \eqref{eq:HresProj-c-2}
\begin{align}
R_{\sigma^\prime\sigma}^c(z)(Q)=&
\pm\img\delta_{\sigma^\prime\sigma}
\int_0^\infty\e^{\pm\img t z}
K_{\pm\img t}^0(x)K_{\pm\img t}^0(X) 
\nonumber \\
&\cdot(\delta_{s0}+\delta_{S1}
[\delta_{s1}\e^{\mp2\img\beta t}
+\delta_{s,-1}\e^{\pm2\img\beta t}])\dd t
\label{eq:HresProj-c-2-alpha0}
\end{align}
for a.e. $Q=(x,X)\in\R^6$. The improper Riemann integrals
\begin{subequations}\label{eq:tm3-abc}
\begin{align}
\int_0^\infty\e^{\pm\img tz}
K_{\pm\img t}^0(x)K_{\pm\img t}^0(X)t^{-n}\dd t=&
\lim_{\epsilon\searrow0}\int_\epsilon^\infty \e^{\pm\img tz}
\frac{\exp(\frac{\pm\img Q^2}{4t})}{
(\pm4\pi\img t)^3}t^{-n} \dd t 
\label{eq:tm3-a} 
\\
=&\lim_{\epsilon\searrow0}
\int_0^\infty\e^{\pm\img tz}
\frac{\exp(\frac{\pm\img Q^2-\epsilon)}{4t})}{
(\pm4\pi\img t)^3}t^{-n} \dd t 
\label{eq:tm3-b} \\
=&\frac{\pm\img z^{1+n/2}}{
2^{3-n}\pi^3\abs{Q}^{n+2}}K_{n+2}(\abs{Q}\sqrt{-z})
\label{eq:tm3}
\end{align}
\end{subequations}
for $n\in\Z$; the upper (lower) sign is for $\Im z>0$ ($\Im z<0$). 
To show the second equality, let us define
\[I_\epsilon=\int_0^\infty
\e^{\img[tz+Q^2/(4t)]-\epsilon/t}t^n\dd t\]
for $\Im z>0$, $\abs{Q}>0$, $\epsilon>0$, $n\in\Z$. 
Making the substitution $t\lto t-\epsilon$ and then using 
the decomposition $\int_\epsilon^\infty
=\int_{\epsilon}^\delta+\int_\delta^\infty$ for $\delta>\epsilon$, 
we get 
$I_\epsilon=I_{\epsilon,\delta}^{(1)}+I_{\epsilon,\delta}^{(2)}$ 
where
\begin{align*}
I_{\epsilon,\delta}^{(1)}=&
\int_\epsilon^\delta
\e^{\img\{(t-\epsilon)z+Q^2/[4(t-\epsilon)]\}
-\epsilon/(t-\epsilon)}
(t-\epsilon)^n\dd t\,, 
\\
I_{\epsilon,\delta}^{(2)}=&
\int_\delta^\infty
\e^{\img\{(t-\epsilon)z+Q^2/[4(t-\epsilon)]\}
-\epsilon/(t-\epsilon)}
(t-\epsilon)^n\dd t\,.
\end{align*}
In the limit, the second integral
\begin{align*}
\lim_{\delta\searrow\epsilon}\lim_{\epsilon\searrow0}
I_{\epsilon,\delta}^{(2)}=&
\lim_{\delta\searrow\epsilon}\lim_{\epsilon\searrow0}
\int_\delta^\infty
\e^{\img\{(t-\epsilon)z+Q^2/[4(t-\epsilon)]\}
-\epsilon/(t-\epsilon)}
(t-\epsilon)^n\dd t 
\\
=&\lim_{\delta\searrow0}
\int_\delta^\infty
\e^{\img[tz+Q^2/(4t)]}t^n\dd t
\end{align*}
because $t\geq\delta>\epsilon$. The first integral
\[I_{\epsilon,\delta}^{(1)}=
\int_0^{\delta-\epsilon}
\e^{\img[tz+Q^2/(4t)]-\epsilon/t}t^n\dd t\,.\]
Since $\e^{-\epsilon/t}t^n\lto0$ as $t\lto0$, 
we have in the limit 
\[\lim_{\delta\searrow\epsilon}\lim_{\epsilon\searrow0}
I_{\epsilon,\delta}^{(1)}=0\]
which, together with the above result, shows the equality  
\eqref{eq:tm3-a} $=$ \eqref{eq:tm3-b} for $\Im z>0$; 
the proof for $\Im z<0$ is analogous. 

Relation~\eqref{eq:tm3} follows from \eqref{eq:tm3-b}
by using \eg \cite[equation~(9.42)]{Temme}.
Substitute \eqref{eq:tm3} with $n=0$ in 
\eqref{eq:HresProj-c-2-alpha0} and deduce \eqref{eq:Rca0}. 
Passing to the limit $\beta\searrow0$ one gets 
\eqref{eq:HresProj-c-alphabeta0}.
\end{proof}
\begin{proof}[Proof of proposition~\ref{prop:RcDiagAlphaSmall}]
We have by \eqref{eq:HresProj-c-2}
\begin{align}
R_{\sigma\sigma}^c(z)(x,X)=&
\pm\img\int_0^\infty\e^{\pm\img t z}
K_{\pm\img t}^0(x)K_{\pm\img t}^0(X)
(\delta_{s0}[a_{\pm\img t}^2-\beta^2b_{\pm\img t}^2 
\nonumber \\
&-(-1)^S\frac{\alpha^2}{8t^2}b_{\pm\img t}^2
(x^-X^++x^+X^-)] 
\nonumber \\
&+\delta_{S1}[\delta_{s1}
(a_{\pm\img t}-\beta b_{\pm\img t})^2
+\delta_{s,-1}(a_{\pm\img t}
+\beta b_{\pm\img t})^2])\dd t
\label{eq:1233444}
\end{align}
and by \eqref{eq:atbt-small}
\begin{subequations}\label{eq:atbt-alphasmall}
\begin{align}
a_{\img t}^2-\beta^2 b_{\img t}^2=&
1+\frac{\img\alpha^2}{2\beta^2t}(1-\cos(2\beta t))
+O(\alpha^4)\,, 
\\
(a_{\img t}\pm\beta b_{\img t})^2=&
\e^{\pm2\img\beta t}
+\frac{\alpha^2}{2\beta^2t}
(\e^{\pm2\img\beta t}(\img\pm 2\beta t)-\img)
+O(\alpha^4)
\end{align}
\end{subequations}
as $\alpha\searrow0$.
In view of \eqref{eq:HresProj-c-2-alpha0} and using 
$a_{-\img t}=\ol{a_{\img t}}$ (and similarly for $b_{-\img t}$) 
and the exponentiation of sine and cosine functions, substitute 
\eqref{eq:atbt-alphasmall} in \eqref{eq:1233444}, and
then apply \eqref{eq:tm3} for $n\in\{0,1,2\}$, and deduce 
\eqref{eq:RczalphaSmall} and \eqref{eq:DeltaCorr}.
\end{proof}
\section{Supersingular perturbation}
When considering the two atoms interacting via the zero-range 
potential which depends on the relative coordinate $x\in\R^3$, 
one follows a usual procedure and restricts the initial self-adjoint 
operator (Hamiltonian) $H^c$ to the set of functions vanishing at 
$x=0$, and then looks for possible self-adjoint extensions of the 
obtained symmetric operator. One should keep in mind that $H^c$ is 
non-separable in $Q$ for $\alpha>0$. 

Equivalently, one defines the singular distribution 
$\psi_\sigma$ ($\sigma\in\bas$) concentrated at $Q_0=(0,X)$ 
via the duality pairing
\begin{equation}
\braket{\psi_\sigma,f}=N_\sigma f_\sigma(Q_0)
\label{eq:del}
\end{equation}
for 
\[f=\sum_\sigma f_\sigma\ot\ket{\sigma}\in 
C^\infty(\R^6)\odot(\C^3\op\C^1)\]
and some normalization constant $N_\sigma>0$. 
Since $C_0^\infty\subset C^\infty$ and $C_0^\infty$ is 
dense in $H^2$, the restricted operator is thus the operator
$H^c$ subject to the boundary condition $\braket{\psi_\sigma,f}=0$ 
for $f\in\dom H^c$ and all $Q_0$. Using the scale 
$(\H_n=\H_n(H^c))_{n\in\Z}$ associated with $H^c$, 
$\dom H^c=\H_2$, \ie $H^c$ defines a mapping $\H_2\lto\H_0=\K^c$; 
the reader may refer to \cite{Albeverio00} for more details.

When $\psi_\sigma\in\H_{-n}\setm\H_0$ for some $n\in\N$,
the duality pairing in \eqref{eq:del} is equivalently 
defined via the scalar product $\braket{\cdot,\cdot}_0$ 
in $\H_0$:
\begin{equation}
\braket{\psi_\sigma,f}
=\braket{(\abs{\He^c}+\id)^{-n/2}\psi_\sigma\,,
(\abs{H^c}+\id)^{n/2}f}_0
\end{equation}
with $f\in\H_n$. Here $\id=\id_{\K^c}$ and 
$(\abs{\He^c}+\id)^{-n/2}$ is an extension of 
$(\abs{H^c}+\id)^{-n/2}$ when considered as a mapping from
$\H_{-n}$ onto $\H_0$. The duality pairing is well-defined 
since we have
$\abs{\braket{\psi_\sigma,f}}\leq 
\norm{\psi_\sigma}_{-n}\,\norm{f}_n$.
We remark the following:
\begin{prop}\label{prop:Hc-2}
The operator $\He^c$ is a continuation of $H^c\co\H_2\lto\H_0$ 
as a bounded operator from $\H_0$ into $\H_{-2}$. 
\end{prop}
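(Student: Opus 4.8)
The plan is to realize $\He^c$ as the unique bounded continuation of $H^c$ to the next lower rung of the scale, the whole argument resting on a single spectral estimate. Write $B=\abs{H^c}+\id\geq\id$; by the spectral theorem $B$ is self-adjoint, positive and boundedly invertible, and the scale norms are $\norm{f}_n=\norm{B^{n/2}f}_0$, so that $\H_2=\dom H^c=\dom B$ while $\H_{-2}$ is the completion of $\H_0$ in the norm $\norm{\cdot}_{-2}$. Since $H^c$ and $B$ are both functions of $H^c$, they commute on $\dom H^c$, and the operator $H^cB^{-1}$ is bounded with $\norm{H^cB^{-1}}\leq1$ because the symbol $\lambda\mapsto\lambda/(\abs{\lambda}+1)$ is bounded by $1$ on the spectrum.

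First I would establish the core inequality. For $f\in\H_2$ one has $H^cf\in\H_0\subseteq\H_{-2}$, and, using the commutation in the middle step,
\[
\norm{H^cf}_{-2}=\norm{B^{-1}H^cf}_0=\norm{H^cB^{-1}f}_0\leq\norm{H^cB^{-1}}\,\norm{f}_0\leq\norm{f}_0.
\]
Thus $H^c$, viewed as a map $\H_2\lto\H_{-2}$, is bounded in the $\norm{\cdot}_0\lto\norm{\cdot}_{-2}$ topology.

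Next I would invoke density and the bounded-linear-transformation theorem. Because $\H_2$ is dense in $\H_0$ in the scale construction (the domain of the self-adjoint operator $B$ is dense in $\K^c=\H_0$), the estimate above shows that $H^c\co\H_2\lto\H_{-2}$ extends uniquely to a bounded operator $\He^c\co\H_0\lto\H_{-2}$. On $\H_2$ this extension agrees with $H^c$ by continuity together with the canonical inclusion $\H_0\hookrightarrow\H_{-2}$, so $\He^c$ is genuinely a continuation of $H^c\co\H_2\lto\H_0$, as claimed. The same computation with $\norm{\cdot}_n$ in place of $\norm{\cdot}_{-2}$ shows $\He^c$ maps each rung $\H_{n+2}$ boundedly into $\H_n$, which is the mechanism that will later be iterated down the scale.

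The argument is essentially soft, so I do not expect a serious obstacle; the only point requiring care is the bookkeeping of the identifications in the scale—namely verifying that the extension $\He^c$ restricted to $\H_2$ reproduces $H^c$ under the embedding $\H_0\hookrightarrow\H_{-2}$, rather than yielding some other continuation. This is automatic once one observes that both the original operator and its extension arise from the single bounded symbol $H^cB^{-1}$ acting between adjacent rungs of the scale.
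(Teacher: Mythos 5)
Your proof is correct and rests on the same mechanism as the paper's: the scale norms are generated by powers of $\abs{H^c}+\id$, so the continuation follows from the boundedness of the symbol $\lambda\lmap\lambda/(\abs{\lambda}+1)$ together with the norm identities $\norm{\cdot}_{n}=\norm{(\abs{H^c}+\id)^{n/2}\cdot}_0$. The paper compresses this into a single norm-identity chain (exhibiting $\abs{\He^c}+\id$ as an isometry of $\H_0$ into $\H_{-2}$) while you spell out the contraction estimate on the dense set $\H_2$ and the unique bounded extension; the substance is identical.
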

\begin{proof}
Relation $\psi_\sigma\in\H_{-n}$ implies 
$h_\sigma=(\abs{\He^c}+\id)^{-n/2}\psi_\sigma\in\H_0$ 
and hence
\[\norm{\psi_\sigma}_{-n}=\norm{h_\sigma}_0
=\norm{(\abs{\He^c}+\id)^{-1}
(\abs{\He^c}+\id)h_\sigma}_0
=\norm{(\abs{\He^c}+\id)h_\sigma}_{-2}\]
\ie $\He^c$ defines a mapping $\H_0\lto\H_{-2}$.
\end{proof}
Thus, the task is to find $n\in\N$ for which 
$\psi_\sigma\in\H_{-n}$ holds. It suffices to verify the relation 
for $H^c$ parametrized by $\alpha=\beta=0$, because both $h^0$ and 
$h$ are self-adjoint on their common domain of definition; 
the Green function is given by \eqref{eq:HresProj-c-alphabeta0} for 
such $H^c$. On the other hand, in order to use the Krein formula 
\cite{Kurasov} for calculating eigenvalues later on 
(see also the discussion in section~\ref{sec:concl}), we have to 
compute the normalization constant $N_\sigma$, and we do so for 
$\alpha$ small (and $\beta$ arbitrary).
\begin{thm}\label{thm:super}
We have $\psi_\sigma\in\H_{-4}\setm\H_{-3}$. Moreover, 
if $((\He^c)^2+\id)^{-1}\psi_\sigma\in\H_0$ is the unit vector, 
then, for $\alpha\geq0$ arbitrarily small, 
the normalization constant satisfies the relation
\begin{align}
N_\sigma^{-2}=&
\frac{1}{512\pi^3}(\delta_{s0}\pi
+2\delta_{S1}[\delta_{s1}
(-2\beta+\theta_\beta(1+4\beta^2)) 
\nonumber \\ 
&+\delta_{s,-1}(2\beta+(\pi-\theta_\beta)
(1+4\beta^2))])
\nonumber \\ 
&+\frac{\alpha^2}{1536\pi^3\beta^2}(
\frac{1}{4}\delta_{s0}(2\beta
[\pi(5+4\beta^2)-11\theta_\beta
+4\beta(2-3\beta\theta_\beta)] 
\nonumber \\ 
&-4\log(1+4\beta^2)+\beta\sqrt{1+4\beta^2}
[(4\beta(\pi-\theta_\beta)-\log(1+4\beta^2))
\cos\theta_\beta 
\nonumber \\ 
&+2(\pi-\theta_\beta+\beta\log(1+4\beta^2))
\sin\theta_\beta ]) 
\nonumber \\ 
&+\delta_{S1}[\delta_{s1}
(8\beta^2(1-2\beta\theta_\beta)
+\log(1+4\beta^2)) 
\nonumber \\ 
&+\delta_{s,-1}(8\beta^2
(1+2\beta(\pi-\theta_\beta))
+\log(1+4\beta^2))] )
+O(\alpha^4)
\label{eq:fnrm}
\end{align}
with $\theta_\beta=\arg(2\beta+\img)$ 
($\arg$ is the principal value of the argument). When $\beta=0$, 
one assumes the limit $\beta\searrow0$ in \eqref{eq:fnrm}.
\end{thm}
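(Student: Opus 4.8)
The plan is to reduce the membership statement $\psi_\sigma\in\H_{-4}\setm\H_{-3}$ to an explicit integrability check on the Green function in momentum (or equivalently position) representation, and then to extract the normalization constant $N_\sigma$ by computing a single scalar integral against the $\alpha$-expanded kernel from proposition~\ref{prop:RcDiagAlphaSmall}. First I would recall, as noted in the text preceding the statement, that membership in the scale $(\H_n)$ may be tested using the operator parametrized by $\alpha=\beta=0$, since $h$ and $h^0$ share a common domain; thus for the qualitative claim $\psi_\sigma\in\H_{-n}$ the relevant Green function is \eqref{eq:HresProj-c-alphabeta0}. The vector $\psi_\sigma$ is concentrated at $Q_0=(0,X)$ via the pairing \eqref{eq:del}, so its ``smoothness'' is governed by how fast $R^c(z)(Q)$ decays; concretely, $\psi_\sigma\in\H_{-n}$ iff $(\abs{\He^c}+\id)^{-n/2}\psi_\sigma\in\H_0$, which is a finiteness condition on an integral of the form $\int\abs{\widehat{\psi_\sigma}(k)}^2(\abs{E(k)}+1)^{-n}\dd k$, where $E(k)$ is the symbol of $H^c$ growing like $\abs{k}^2$. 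Since $\widehat{\psi_\sigma}$ is (up to the normalization and the Clebsch--Gordan mixing) constant in the relative momentum, the integral behaves like $\int(\abs{k}^2+1)^{-n}\dd k$ over $\R^3$ in the relative variable, which converges precisely when $2n>3$, i.e.\ diverges for $n=3$ but could converge for $n=4$; this is exactly the dichotomy $\psi_\sigma\in\H_{-4}\setm\H_{-3}$. I would phrase this cleanly by evaluating $\norm{\psi_\sigma}_{-n}^2$ as a radial integral and showing it is infinite for $n=3$ and finite for $n=4$.

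For the normalization constant I would work directly with the hypothesis that $h_\sigma=((\He^c)^2+\id)^{-1}\psi_\sigma$ is a unit vector, so $N_\sigma^{-2}=\norm{h_\sigma}_0^2=\braket{\psi_\sigma,((\He^c)^2+\id)^{-2}\psi_\sigma}$. The key identity is that $((\He^c)^2+\id)^{-2}$ is, up to a constant, a derivative/combination of resolvents $R^c(\img)$ and $R^c(-\img)$ evaluated at the symmetric points $z=\pm\img$; indeed $(E^2+1)^{-2}$ is a rational function of $E$ whose partial-fraction decomposition in $E\mp\img$ reproduces resolvent kernels at $z=\pm\img$ together with their $z$-derivatives. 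Thus $N_\sigma^{-2}$ reduces to a finite linear combination of values $\partial_z^j R_{\sigma\sigma}^c(z)(Q)\big|_{z=\pm\img}$ integrated against the point-evaluation structure of $\psi_\sigma$; because $\psi_\sigma$ is concentrated in the relative coordinate $x$ at $0$, this amounts to evaluating the $x\to0$ (regularized) behaviour of the diagonal Green function. Substituting the explicit $\alpha$-expansion \eqref{eq:RczalphaSmall}--\eqref{eq:DeltaCorr} and using the known short-distance/value expressions for the Macdonald functions $K_\nu$ appearing there, I would collect the $\alpha^0$ and $\alpha^2$ contributions and match them to the two blocks displayed in \eqref{eq:fnrm}, with the argument $\theta_\beta=\arg(2\beta+\img)$ entering through the branch of $\sqrt{\pm2\beta\mp z}$ at $z=\pm\img$.

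The main obstacle I anticipate is the bookkeeping in the $\alpha^2$ term: the correction $\Delta_\sigma(z)$ in \eqref{eq:DeltaCorr} carries fractional powers $(z\pm2\beta)^{3/2}$ and Macdonald functions of several orders ($K_2,K_3,K_4$) with branch-dependent arguments, and these must be evaluated at $z=\pm\img$, differentiated in $z$ as required by the $((\He^c)^2+\id)^{-2}$ weight, and then summed over the complex-conjugate pair $\pm\img$ to produce a real result. Tracking the branches of $\sqrt{-z}$, $\sqrt{2\beta-z}$, $\sqrt{-2\beta-z}$ consistently (so that the logarithms $\log(1+4\beta^2)$ and the trigonometric factors $\cos\theta_\beta,\sin\theta_\beta$ emerge with correct signs) is the delicate step; the upper/lower sign convention tied to $\Im z\gtrless0$ must be applied uniformly. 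Once the branch data is fixed, the remaining work—reducing the $x\to0$ limits of the $K_\nu$ combinations and integrating the $X$-dependence—is routine special-function manipulation, and the $\delta_{s0}$, $\delta_{S1}\delta_{s,\pm1}$ blocks of \eqref{eq:fnrm} should fall out term by term. The limit $\beta\searrow0$ is then taken at the end to cover the case $\beta=0$.
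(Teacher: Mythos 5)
Your strategy for the membership claim founders on what $\psi_\sigma$ actually is. By \eqref{eq:del}, $\psi_\sigma$ is the point evaluation $f\lmap N_\sigma f_\sigma(Q_0)$ at $Q_0=(0,X)\in\R^6$, i.e.\ a Dirac delta in \emph{all six} variables $Q=(x,X)$, not a delta in the relative coordinate $x$ alone with some profile in $X$. Hence, in the $\alpha=\beta=0$ case where $H^c$ acts as a Fourier multiplier with symbol growing like $\abs{k}^2$, the integrability test is $\int_{\R^6}(1+\abs{k}^2)^{-n}\dd k<\infty$, which holds iff $2n>6$; this six-dimensional count is precisely what produces the dichotomy $\psi_\sigma\in\H_{-4}\setm\H_{-3}$ (logarithmic divergence at $n=3$, finiteness at $n=4$). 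Your version --- an integral over $\R^3$ in the relative momentum with criterion $2n>3$ --- is internally inconsistent: $n=3$ satisfies $2n>3$, yet you declare divergence. Worse, under your reading of $\psi_\sigma$ the delta would in fact lie in $\H_{-2}$ (exactly as in the single-particle situation recalled in the introduction), contradicting the theorem; the stated conclusion appears to be retrofitted to the known answer rather than derived. The same misreading infects the normalization computation: since $\psi_\sigma$ is a point functional on $\R^6$, the pairing $\braket{\psi_\sigma,(\cdots)\psi_\sigma}$ reduces to the coincident-point limit $\abs{Q}\searrow0$ of the resolvent-kernel combination in all six variables (cf.\ \eqref{eq:fttttt1} in the paper's proof); there is no ``integrating the $X$-dependence'' step at all, and performing such an integration would produce the wrong object. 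A further slip: $N_\sigma^{-2}=\norm{h_\sigma}_0^2$ cannot be right as written, since $\norm{h_\sigma}_0=1$ by hypothesis; the unit-norm condition reads $N_\sigma^2\cdot(\text{kernel limit at }Q\to0)=1$, and it is this equation that determines $N_\sigma$.

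The salvageable parts are genuine. Your partial-fraction reduction of $((\He^c)^2+\id)^{-2}$ to resolvents and their first derivatives at $z=\pm\img$ is exactly the identity the paper uses: it rewrites the equivalent norm $\norm{\psi_\sigma}_{-4}^{*}$ in that form, passes to the Green function, and inserts proposition~\ref{prop:RcDiagAlphaSmall} together with the small-argument asymptotics of the Macdonald functions to obtain \eqref{eq:fnrm}. Moreover, a correctly executed momentum-space computation over $\R^6$ for $\alpha=\beta=0$ would be a legitimate, arguably cleaner, substitute for the paper's route to non-membership, which proceeds by contradiction through lemma~\ref{lem:infkdsl} (a lower bound on $\norm{\cdot}_{-3}^2$ by a resolvent pairing) and exhibits an $O(\log\abs{Q})$ divergence of that pairing using \eqref{eq:HresProj-c-alphabeta0}. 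But as written, the dimension-counting error sits precisely at the point the theorem is about --- the supersingularity is a two-particle, six-dimensional effect --- so the proposal establishes neither the membership claim nor the formula for $N_\sigma$.
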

\begin{rem}\label{rem:rem2}
For $\alpha=\beta=0$, relation \eqref{eq:fnrm} gives 
$N_\sigma=N$ and $\sqrt{N}=2c$, where $c=2\sqrt[4]{2}\sqrt{\pi}$ 
is the normalization constant for the functionals of class 
$\H_{-2}(h^0)$ \cite[section~2.3]{Albeverio00}.
\end{rem}
\begin{proof}
We use lemma~\ref{lem:infkdsl} to show that 
$\psi_\sigma\notin\H_{-3}$.
\begin{lem}\label{lem:infkdsl}
If $\phi\in\H_{-3}\setm\H_{-2}$ then 
\begin{align}
2^6\norm{\phi}_{-3}^2\geq&
\vert\langle\phi,\{\frac{3\img}{2}
[(\He^c+\img\id)^{-2}-(\He^c-\img\id)^{-2}] 
\nonumber \\
&+(\He^c+\img\id)^{-3}+(\He^c-\img\id)^{-3}\}
\phi\rangle\vert\,.
\label{eq:Hm3in}
\end{align}
\end{lem}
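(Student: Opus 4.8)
The plan is to produce a lower bound for $\norm{\phi}_{-3}^2$ by expressing the $\H_{-3}$-norm in terms of the resolvent of $\He^c$ evaluated at the imaginary points $\pm\img$, and then to recognize the bracketed combination of resolvent powers on the right-hand side of \eqref{eq:Hm3in} as (a multiple of) a real part of $(\abs{\He^c}+\id)^{-3}$ up to controllable error terms. Recall that by definition $\norm{\phi}_{-3}^2=\braket{(\abs{\He^c}+\id)^{-3}\phi,\phi}$. The difficulty is that the spectral weight $(\abs{\lambda}+1)^{-3}$ is not a rational function of $\lambda$, whereas the operators appearing on the right of \eqref{eq:Hm3in} are rational in $\He^c$ (built from $(\He^c\pm\img\id)^{-2}$ and $(\He^c\pm\img\id)^{-3}$). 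So the first step is to reduce everything to a scalar inequality via the spectral theorem.

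First I would invoke the spectral theorem for the self-adjoint operator $\He^c$ (here understood as the natural self-adjoint realization whose spectral projections govern the scale $(\H_n)$), writing $\norm{\phi}_{-3}^2=\int(\abs{\lambda}+1)^{-3}\dd\mu_\phi(\lambda)$ where $\mu_\phi$ is the spectral measure of $\phi$. Likewise the right-hand side of \eqref{eq:Hm3in} becomes $\int g(\lambda)\dd\mu_\phi(\lambda)$ for the explicit scalar function
\[
g(\lambda)=\frac{3\img}{2}\left[(\lambda+\img)^{-2}-(\lambda-\img)^{-2}\right]
+(\lambda+\img)^{-3}+(\lambda-\img)^{-3}\,.
\]
Since $(\lambda-\img)^{-1}=\ol{(\lambda+\img)^{-1}}$ for real $\lambda$, both bracketed groups are of the form $w-\ol w$ and $w+\ol w$, so $g$ is real-valued; a short computation writing $\lambda+\img=r\e^{\img\varphi}$ with $r=\sqrt{\lambda^2+1}$ gives $g(\lambda)=r^{-3}(3\sin2\varphi+2\cos3\varphi)$, hence the scalar bound reduces to showing $2^6(\lambda^2+1)^{-3/2}\geq\abs{g(\lambda)}$, i.e. $64\geq\abs{3\sin2\varphi+2\cos3\varphi}$, which is trivially true since the right side is at most $5$.

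The remaining issue is bookkeeping at the level of the quadratic forms rather than a genuine analytic obstacle. I would verify that the integral representations are legitimate, i.e. that $\phi\in\H_{-3}$ guarantees $\int(\abs{\lambda}+1)^{-3}\dd\mu_\phi<\infty$ so that all the resolvent-power integrals converge absolutely (each integrand is $O((\abs{\lambda}+1)^{-2})$ or $O((\abs{\lambda}+1)^{-3})$, dominated by $(\abs{\lambda}+1)^{-3}$ up to a constant on the support, using the hypothesis $\phi\in\H_{-3}\setm\H_{-2}$ only to ensure we are in the correct scale space). Combining the pointwise scalar inequality $\abs{g(\lambda)}\leq 2^6(\lambda^2+1)^{-3/2}\leq 2^6(\abs{\lambda}+1)^{-3}$ with monotonicity of the integral over the positive measure $\mu_\phi$ then yields
\[
\left\lvert\int g\dd\mu_\phi\right\rvert
\leq\int\abs{g}\dd\mu_\phi
\leq 2^6\int(\abs{\lambda}+1)^{-3}\dd\mu_\phi
=2^6\norm{\phi}_{-3}^2\,,
\]
which is exactly \eqref{eq:Hm3in}.

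The main obstacle I anticipate is purely the elementary estimate $\abs{3\sin2\varphi+2\cos3\varphi}\leq5<64$ together with the comparison $(\lambda^2+1)^{-3/2}\leq(\abs{\lambda}+1)^{-3}$; the constant $2^6=64$ is deliberately wasteful, so no sharpness is needed and the inequality holds with enormous slack. Once the spectral reduction is in place, nothing delicate remains, and the lemma follows by applying $\abs{\braket{\phi,T\phi}}\leq\int\abs{g}\dd\mu_\phi$ where $T$ is the bracketed operator.
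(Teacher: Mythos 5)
Your overall route --- reduce \eqref{eq:Hm3in} by the spectral theorem to a pointwise inequality for the scalar symbol $g(\lambda)$ of the bracketed operator --- is viable and genuinely different in form from the paper's proof, but two of your scalar steps are false as written. First, the polar computation is wrong: with $\lambda+\img=r\e^{\img\varphi}$, $r=\sqrt{\lambda^2+1}$, one has
\[
g(\lambda)=3r^{-2}\sin2\varphi+2r^{-3}\cos3\varphi
=\frac{8\lambda^3}{(\lambda^2+1)^3}\,,
\]
not $r^{-3}(3\sin2\varphi+2\cos3\varphi)$: the contribution of the second-order resolvents carries $r^{-2}$, not $r^{-3}$. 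Accordingly your bound $\abs{g(\lambda)}\leq5r^{-3}$ fails; since $\abs{g(\lambda)}\,r^3=8\abs{\lambda}^3(\lambda^2+1)^{-3/2}\lto8$ as $\abs{\lambda}\lto\infty$, the optimal constant is $8$. Second, and more seriously, the comparison $(\lambda^2+1)^{-3/2}\leq(\abs{\lambda}+1)^{-3}$ that you use to pass from $r^{-3}$ to the weight defining $\norm{\cdot}_{-3}$ is backwards: $(\abs{\lambda}+1)^2\geq\lambda^2+1$, hence $(\abs{\lambda}+1)^{-3}\leq(\lambda^2+1)^{-3/2}$, so a bound by $(\lambda^2+1)^{-3/2}$ does not imply the needed bound by $(\abs{\lambda}+1)^{-3}$. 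Both defects are repairable because $2^6$ has slack: cubing the elementary inequality $\abs{\lambda}(\abs{\lambda}+1)\leq2(\lambda^2+1)$ gives directly
\[
\abs{g(\lambda)}=\frac{8\abs{\lambda}^3}{(\lambda^2+1)^3}
\leq\frac{2^6}{(\abs{\lambda}+1)^3}\,,
\]
which is exactly the pointwise inequality your argument needs.

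With that repair your proof becomes essentially a spectral transcription of the paper's: the operator inequality $(\abs{\He^c}+\id)^{-1}\geq2^{-1}\abs{\He^c}((\He^c)^2+\id)^{-1}$, which the paper applies thrice, is the same scalar fact $\abs{\lambda}(\abs{\lambda}+1)\leq2(\lambda^2+1)$ cubed, and the paper's regrouped pairing has precisely the symbol $8\lambda^3(\lambda^2+1)^{-3}=g(\lambda)$. The one substantive point the paper handles that you only wave at is the well-definedness of the pairings for $\phi\in\H_{-3}\setm\H_0$: the paper first establishes $[(\He^c-\img\id)^{-1}+(\He^c+\img\id)^{-1}]\phi=2\He^c((\He^c)^2+\id)^{-1}\phi\in\H_{-2}$ and $[(\He^c-\img\id)^{-1}-(\He^c+\img\id)^{-1}]\phi=2\img((\He^c)^2+\id)^{-1}\phi\in\H_0$, so that every scalar product it writes couples elements of dual scale spaces. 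In your version this corresponds to defining the spectral measure $\mu_\phi$ of a $\phi$ that is not in $\H_0$, \eg via $\dd\mu_\phi=(\abs{\lambda}+1)^3\dd\mu_h$ with $h=(\abs{\He^c}+\id)^{-3/2}\phi\in\H_0$, and then checking that the right-hand side of \eqref{eq:Hm3in} equals $\abs{\int g\dd\mu_\phi}$ with the integral absolutely convergent; this needs to be stated, not assumed.
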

\begin{proof}
We have
\begin{align*}
&[(\He^c-\img\id)^{-1}+(\He^c+\img\id)^{-1}]\phi
=2\He^c((\He^c)^2+\id)^{-1}\phi\in\H_{-2}\,, \\
&[(\He^c-\img\id)^{-1}-(\He^c+\img\id)^{-1}]\phi
=2\img((\He^c)^2+\id)^{-1}\phi\in\H_0
\end{align*}
because $((\He^c)^2+\id)^{-1}\co\H_{-4}\lto\H_0$, 
$\H_{-3}\subset\H_{-4}$ densely, and 
$\He^c\co\H_0\lto\H_{-2}$ by proposition~\ref{prop:Hc-2}. 
Now, all we need is to apply
$(\abs{\He^c}+\id)^{-1}\geq
2^{-1}\abs{\He^c}((\He^c)^2+\id)^{-1}$ 
(see \eg the proof of theorem~3.1 in \cite{Albeverio97-1}) 
thrice to get
\begin{align*}
\norm{\phi}_{-3}^2\geq&
2^{-6}\vert\langle[(\He^c-\img\id)^{-1}
+(\He^c+\img\id)^{-1}]\phi,
(\He^c-\img\id)^{-2}\phi+(\He^c+\img\id)^{-2}\phi 
\\
&-\img[(\He^c-\img\id)^{-1}-(\He^c+\img\id)^{-1}]
\phi\rangle_0\vert\,.
\end{align*}
Transferring $[(\He^c-\img\id)^{-1}+(\He^c+\img\id)^{-1}]$ 
in the latter scalar product from left to right gives the 
result as claimed.
\end{proof}

Assume that $\psi_\sigma\in\H_{-3}\setm\H_{-2}$. 
In view of \eqref{eq:Hm3in}
\[\{\frac{3\img}{2}[(\He^c+\img\id)^{-2}
-(\He^c-\img\id)^{-2}]
+(\He^c+\img\id)^{-3}
+(\He^c-\img\id)^{-3}\}\psi_\sigma\in\H_3\,.\]
Since $\H_3\subset\H_0\subset\H_{-3}$ densely, for 
$f=\sum_\sigma f_\sigma\ot\ket{\sigma}\in\H_0$
\begin{align*}
&\langle\{\frac{3\img}{2}
[(\He^c+\img\id)^{-2}-(\He^c-\img\id)^{-2}]
+(\He^c+\img\id)^{-3}+(\He^c-\img\id)^{-3}\}
\psi_\sigma,f\rangle_0 
\\
&=\braket{\psi_\sigma,\{-\frac{3\img}{2}
[(\He^c-\img\id)^{-2}-(\He^c+\img\id)^{-2}]
+(\He^c-\img\id)^{-3}+(\He^c+\img\id)^{-3}\}f} 
\\
&=\braket{\psi_\sigma,\{-\frac{3\img}{2}
[R^c(\img)^\prime-R^c(-\img)^\prime]
+\frac{1}{2}R^c(\img)^{\prime\prime}
+\frac{1}{2}R^c(-\img)^{\prime\prime}\}f}
\end{align*}
where in the last step we also use the relations
\[R^c(z)^\prime
=\frac{\pd}{\pd w}R^c(w)\vert_{w=z}\,, 
\quad 
R^c(z)^{\prime\prime}
=\frac{\pd^2}{\pd w^2}R^c(w)\vert_{w=z}\]
for $z\in\res H^c$. Thus, 
by \eqref{eq:Hresz-c-2} and \eqref{eq:del}
\begin{align}
&\langle\{\frac{3\img}{2}
[(\He^c+\img\id)^{-2}-(\He^c-\img\id)^{-2}]
+(\He^c+\img\id)^{-3}+(\He^c-\img\id)^{-3}\}
\psi_\sigma,f\rangle_0 
\nonumber \\
&=N_\sigma\sum_{\sigma^\prime}\lim\int\{-\frac{3\img}{2}
[R_{\sigma\sigma^\prime}^c(\img)(Q_0-Q)^\prime
-R_{\sigma\sigma^\prime}^c(-\img)(Q_0-Q)^\prime] 
\nonumber \\
&\qquad\qquad\qquad
+\frac{1}{2}R_{\sigma\sigma^\prime}^c(\img)
(Q_0-Q)^{\prime\prime}
+\frac{1}{2}R_{\sigma\sigma^\prime}^c(-\img)
(Q_0-Q)^{\prime\prime}\}
f_{\sigma^\prime}(Q)\dd Q
\label{eq:fttttt0}
\end{align}
where 
\[R_{\sigma\sigma^\prime}^c(z)(Q)^\prime
=\frac{\pd}{\pd w}
R_{\sigma\sigma^\prime}^c(w)(Q)\vert_{w=z}\,, 
\quad 
R_{\sigma\sigma^\prime}^c(z)(Q)^{\prime\prime}
=\frac{\pd^2}{\pd w^2}
R_{\sigma\sigma^\prime}^c(w)(Q)\vert_{w=z}\]
for $\Im z\neq0$. But
\[\ol{R_{\sigma^\prime\sigma}^c(z)(Q)}
=R_{\sigma\sigma^\prime}^c(\ol{z})(-Q)\]
by \eqref{eq:HresProj-c-2}, and hence 
\begin{align}
&\{\frac{3\img}{2}[(\He^c+\img\id)^{-2}
-(\He^c-\img\id)^{-2}]
+(\He^c+\img\id)^{-3}
+(\He^c-\img\id)^{-3}\}\psi_\sigma 
\nonumber \\
&=\frac{N_\sigma}{2}\sum_{\sigma^\prime}
\{3\img[R_{\sigma^\prime\sigma}^c(-\img)
(\cdot-Q_0)^\prime
-R_{\sigma^\prime\sigma}^c(\img)
(\cdot-Q_0)^\prime] 
\nonumber \\
&\qquad\qquad
+R_{\sigma^\prime\sigma}^c(-\img)
(\cdot-Q_0)^{\prime\prime}
+R_{\sigma^\prime\sigma}^c(\img)
(\cdot-Q_0)^{\prime\prime}\}
\ot\ket{\sigma^\prime}
\label{eq:fttttt}
\end{align}
a.e. on $\R^6$. When deriving \eqref{eq:fttttt} 
from \eqref{eq:fttttt0} we have also used the following property: 
Since the Lebesgue integral $\int_{\R^6}$ on the left-hand side of 
\eqref{eq:fttttt0} exists by hypothesis on $\psi_\sigma$, 
it coincides with the improper Riemann integral $\lim\int\equiv
\lim_{R\nearrow\infty}\int_{\abs{Q}\leq R}$.

By \eqref{eq:del} and \eqref{eq:fttttt}, 
the duality pairing on the right-hand side of \eqref{eq:Hm3in} 
with $\phi=\psi_\sigma$ is therefore given by 
\begin{equation}
\frac{N_\sigma^2}{2}\lim_{\abs{Q}\searrow0}
(3\img[R_{\sigma\sigma}^c(-\img)(Q)^\prime
-R_{\sigma\sigma}^c(\img)(Q)^\prime]
+R_{\sigma\sigma}^c(-\img)(Q)^{\prime\prime}
+R_{\sigma\sigma}^c(\img)(Q)^{\prime\prime})\,.
\label{eq:fttttt1}
\end{equation}
It suffices to take $R^c_{\sigma\sigma}(\pm\img)$ as in 
\eqref{eq:HresProj-c-alphabeta0}
to show the non-existence of \eqref{eq:fttttt1}:
\begin{align*}
3\img&[R_{\sigma\sigma}^c(-\img)(Q)^\prime
-R_{\sigma\sigma}^c(\img)(Q)^\prime]
+R_{\sigma\sigma}^c(-\img)(Q)^{\prime\prime}
+R_{\sigma\sigma}^c(\img)(Q)^{\prime\prime} 
\\
&=O(\log\abs{Q})\quad\text{as}\quad\abs{Q}\searrow0
\end{align*}
\ie $\psi_\sigma\notin\H_{-3}$.

Assume that $\psi_\sigma\in\H_{-4}$. Then
\[\norm{\psi_\sigma}_{-4}\leq
\norm{\psi_\sigma}_{-4}^*
=\norm{((\He^c)^2+\id)^{-1}\psi_\sigma}_0
=\frac{1}{2}\norm{[(\He^c-\img\id)^{-1}
-(\He^c+\img\id)^{-1}]
\psi_\sigma}_0\,.\]
By rearranging the terms within the equivalent norm 
$\norm{\psi_\sigma}_{-4}^*$ we get that 
\[\norm{\psi_\sigma}_{-4}^{*\,2}
=\frac{1}{4}\braket{\psi_\sigma,
\{\img[(\He^c+\img\id)^{-1}-(\He^c-\img\id)^{-1}]
-(\He^c+\img\id)^{-2}
-(\He^c-\img\id)^{-2}\}\psi_\sigma}\,.\]
Then, repeating the steps that were used for obtaining 
\eqref{eq:fttttt} we get that
\[\norm{\psi_\sigma}_{-4}^{*\,2}
=\frac{N_\sigma^2}{4}\lim_{\abs{Q}\searrow0}
(\img[R_{\sigma\sigma}^c(-\img)(Q)
-R_{\sigma\sigma}^c(\img)(Q)]
-R_{\sigma\sigma}^c(-\img)(Q)^\prime
-R_{\sigma\sigma}^c(\img)(Q)^\prime)\,.\]
Again, taking $R^c_{\sigma\sigma}(\pm\img)$ as in 
\eqref{eq:HresProj-c-alphabeta0}, we get that 
$\norm{\psi_\sigma}_{-4}^*=N_\sigma/(16\sqrt{2}\pi)$; hence 
$\psi_\sigma\in\H_{-4}$. 
Using proposition~\ref{prop:RcDiagAlphaSmall}, 
the above formula gives \eqref{eq:fnrm}.
This accomplishes the proof of the theorem.
\end{proof}
\section{Concluding remarks and discussion}\label{sec:concl}
In the paper, in theorem~\ref{thm:lE} and corollary~\ref{cor:lE}, 
we present the integral kernel of the one-parameter 
unitary group for the Rashba spin-orbit coupled operator in 
dimension three. The main motive for considering the unitary group 
is to derive the Green function, \eqref{eq:Hresz-c-2} and 
\eqref{eq:HresProj-c-2}, for the corresponding two-particle operator, 
which is necessary for the spectral analysis of spin-orbit coupled cold 
molecules. For $\alpha\geq0$ small, we compute explicitly the elements 
of the two-particle Green function in propositions~\ref{prop:RcAlpha0}, 
\ref{prop:RcDiagAlphaSmall}, \ref{prop:RcNonDiagAlphaSmall}.

The inter-atomic interaction is zero-range and therefore 
we apply the singular perturbation theory. We show that, 
since the two-particle Hamiltonian in the center-of-mass coordinate 
system is non-separable for $\alpha>0$, the perturbation associated 
to the total Hamiltonian is supersingular (theorem~\ref{thm:super}). 
As a result, no self-adjoint operator can be constructed for 
describing the formation of spin-orbit coupled molecules with 
point-interaction. Instead, one considers the so-called regular 
operators whose spectrum is known to be pure real. 

For example, assume that $\beta=0$ and $\alpha$ is so small that 
we can practically put $\alpha=0$. Formally, the problem reduces 
to the analysis of the operator $-2\Delta_x-\frac{1}{2}\Delta_X$ 
(which is separable) plus the $x$-dependent singular perturbation. 
Using \eqref{eq:HresProj-c-alphabeta0}, the resolvent formula in 
\cite{Kurasov}, and the normalization constant $16\sqrt{2}\pi$
(see \eqref{eq:fnrm}), the singular points $\lambda$ of the 
restricted (to the original Hilbert space) two-particle resolvent 
of the one-parameter regular operator satisfy the relation
\[0=(1+\lambda)[2(1+3\lambda)+\pi(\gamma(1+\lambda)-1)]
-4\lambda^2\log(-\lambda)\] 
for some non-uniquely defined real parameter $\gamma$; hence 
$\lambda<0$ necessarily. On the other hand, when we associate 
the perturbation to $-2\Delta_x$, we have the two-particle 
case described in \cite[theorems~5.2.1 and 5.2.2]{Albeverio00}.
Namely, one solves the eigenvalue problem for the single-particle 
operator $-2\Delta+2c\delta$, with $c$ as in remark~\ref{rem:rem2},
for which it is well-known that there is the single eigenvalue 
below $0$. The question, which was raised in
\cite{Kurasov} in a much more general setting, is whether there 
exists the similarity operator that transforms the  
non-self-adjoint case to the self-adjoint one.
\appendix
\section{Non-diagonal elements of Green function}\label{sec:appC}
\setcounter{section}{1}
Here we list non-diagonal elements $R_{\sigma^\prime\sigma}^c(z)$ 
of the two-particle Green function up to $O(\alpha^4)$; recall 
\eqref{eq:HresProj-c-2}.
\begin{prop}\label{prop:RcNonDiagAlphaSmall}
For $\Im z\neq0$ and $\alpha\geq0$ arbitrarily small, 
the non-diagonal element ($\sigma^\prime\neq\sigma$)
\begin{equation}
R_{\sigma^\prime\sigma}^c(z)=
\alpha\Delta_{\sigma^\prime\sigma}^{(1)}(z)
+\alpha^2\Delta_{\sigma^\prime\sigma}^{(2)}(z)
+\alpha^3\Delta_{\sigma^\prime\sigma}^{(3)}(z)
+O(\alpha^4)
\label{eq:RczalphaSmallNonD}
\end{equation}
where 
\begin{subequations}\label{eq:Delta123}
\begin{align}
\Delta_{\sigma^\prime\sigma}^{(1)}(z)(Q)=&
\frac{\pm\img}{16\sqrt{2}\pi^3\beta\abs{Q}^3}
\Biggl(
(\delta_{s^\prime1}\delta_{S^\prime1}\delta_{s0}
[X^--(-1)^Sx^-] 
\nonumber \\
&-\delta_{s^\prime0}\delta_{s1}\delta_{S1}
[X^+-(-1)^{S^\prime}x^+])
[z^{3/2}K_3(\abs{Q}\sqrt{-z}) 
\nonumber \\
&-(z-2\beta)^{3/2}K_3(\abs{Q}\sqrt{2\beta-z})] 
\nonumber \\
&+(\delta_{s^\prime0}\delta_{s,-1}\delta_{S1}
[(-1)^{S^\prime}X^--x^-]
-\delta_{s^\prime,-1}\delta_{S^\prime1}\delta_{s0}
[(-1)^SX^+-x^+]) 
\nonumber \\
&\cdot[z^{3/2}K_3(\abs{Q}\sqrt{-z})
-(z+2\beta)^{3/2}K_3(\abs{Q}\sqrt{-2\beta-z})]
\Biggr)\,,
\\
\Delta_{\sigma^\prime\sigma}^{(2)}(z)(Q)=&
-\frac{1}{64\pi^3\beta^2\abs{Q}^4}
(2\delta_{S^\prime S}\delta_{S1}
[\delta_{s^\prime1}\delta_{s,-1}x^-X^-
+\delta_{s^\prime,-1}\delta_{s1}x^+X^+] 
\nonumber \\
&+(-1)^S\delta_{s^\prime s}\delta_{s0}
(1-\delta_{S^\prime S})[x^-X^+-x^+X^-])
[2z^2K_4(\abs{Q}\sqrt{-z}) 
\nonumber \\
&-(z+2\beta)^2K_4(\abs{Q}\sqrt{-2\beta-z})
-(z-2\beta)^2K_4(\abs{Q}\sqrt{2\beta-z})]\,, 
\\
\Delta_{\sigma^\prime\sigma}^{(3)}(z)(Q)=&
\frac{1}{32\sqrt{2}\pi^3\beta^3Q^4}
\Biggl(
(\delta_{s^\prime1}\delta_{S^\prime1}
\delta_{s0}[X^--(-1)^Sx^-] 
\nonumber \\
&-\delta_{s^\prime0}\delta_{s1}\delta_{S1}
[X^+-(-1)^{S^\prime}x^+])
[3(z-2\beta)^2K_4(\abs{Q}\sqrt{2\beta-z}) 
\nonumber \\
&-4z^2K_4(\abs{Q}\sqrt{-z})+(z+2\beta)^2
K_4(\abs{Q}\sqrt{-2\beta-z}) 
\nonumber \\
&\pm2\img\beta\abs{Q}(z-2\beta)^{3/2}
K_3(\abs{Q}\sqrt{2\beta-z})] 
\nonumber \\
&+(\delta_{s^\prime0}\delta_{s,-1}\delta_{S1}
[(-1)^{S^\prime}X^--x^-]
-\delta_{s^\prime,-1}\delta_{S^\prime1}\delta_{s0}
[(-1)^SX^+-x^+]) 
\nonumber \\
&\cdot
[3(z+2\beta)^2K_4(\abs{Q}\sqrt{-2\beta-z})
-4z^2K_4(\abs{Q}\sqrt{-z}) 
\nonumber \\
&+(z-2\beta)^2K_4(\abs{Q}\sqrt{2\beta-z}) 
\nonumber \\
&\mp2\img\beta\abs{Q}(z+2\beta)^{3/2}
K_3(\abs{Q}\sqrt{-2\beta-z})]
\Biggr)
\end{align}
\end{subequations}
for a.e. $Q=(x,X)\in\R^6$; the upper (lower) sign is taken when 
$\Im z>0$ ($\Im z<0$).
When in addition $\beta=0$, one assumes the limit $\beta\searrow0$ 
in the above expressions.
\end{prop}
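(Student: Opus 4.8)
The plan is to run the argument of proposition~\ref{prop:RcDiagAlphaSmall} essentially verbatim, the only genuinely new ingredient being the combinatorics of spin-flips in the off-diagonal sector. I would begin from the master formula \eqref{eq:HresProj-c-2} and insert the explicit kernel \eqref{eq:kerh-2}, writing each factor $G_{\pm\img t}^{s_i^\prime s_i}$ as the sum of a spin-diagonal part, which carries $a_{\pm\img t}$ and $b_{\pm\img t}$ but no explicit power of $\alpha$, and a spin-flip part, which carries the explicit factor $\alpha b_{\pm\img t}/(2t)$ together with one of $x^\pm$ or $X^\pm$. Since proposition~\ref{prop:RcAlpha0} already shows that the product of two spin-diagonal parts reduces, after the Clebsch--Gordan summation coming from \eqref{eq:cgcrel}, to something proportional to $\delta_{\sigma^\prime\sigma}$, every off-diagonal element with $\sigma^\prime\neq\sigma$ must contain at least one spin-flip factor; this is why the expansion \eqref{eq:RczalphaSmallNonD} begins at order $\alpha$.

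Next I would organize the contributions by the number of spin-flip factors and the order of the $\alpha$-expansion \eqref{eq:atbt-small} of $a_{\pm\img t},b_{\pm\img t}$. Exactly one flip, paired with the leading ($\alpha^0$) terms of $a,b$, produces the order-$\alpha$ piece $\Delta_{\sigma^\prime\sigma}^{(1)}$; the single uncontracted $x^\pm$ or $X^\pm$ explains its linear-in-$Q$ structure. Two flips, each at leading order in $b$, give the order-$\alpha^2$ piece $\Delta_{\sigma^\prime\sigma}^{(2)}$, bilinear in $x^\pm,X^\pm$. The order-$\alpha^3$ piece $\Delta_{\sigma^\prime\sigma}^{(3)}$ then collects exactly the two single-flip mechanisms that survive: a flip dressed by the $\alpha^2$-correction of its own $b_{\pm\img t}$, and a flip multiplied by the $\alpha^2$-correction of the surviving spin-diagonal factor $a_{\pm\img t}\mp\beta b_{\pm\img t}$; both are again linear in $Q$, in agreement with the last line of \eqref{eq:Delta123}. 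A small but important point is that the $\alpha^2$-correction to $b_{\pm\img t}$ in \eqref{eq:atbt-small} hides an extra $t^{-1}$ in its $\sin(\beta t)/(\beta t)$ term, which is exactly what promotes some of the $\Delta^{(3)}$ contributions from $K_3$ to $K_4$.

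With the $\alpha$-bookkeeping fixed, the $t$-integration is routine: each spin-flip supplies a factor $t^{-1}$, so after the angular and Clebsch--Gordan sums the remaining integrals all fall under \eqref{eq:tm3} with $n\in\{1,2\}$, while the exponentials $\e^{\mp2\img\beta t}$ arising from the combinations $a_{\pm\img t}\mp\beta b_{\pm\img t}$ merely shift $z\mapsto z\mp2\beta$, thereby producing the Macdonald functions $K_3,K_4$ evaluated at $\sqrt{\pm2\beta-z}$. Collecting the three orders yields \eqref{eq:RczalphaSmallNonD}--\eqref{eq:Delta123}, and the limit $\beta\searrow0$ is taken exactly as in proposition~\ref{prop:RcDiagAlphaSmall}.

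The main obstacle, and the only place where the off-diagonal case is substantially harder than the diagonal one, is the Clebsch--Gordan bookkeeping. For each ordered pair $(\sigma^\prime,\sigma)$ with $\sigma^\prime\neq\sigma$ one has to decide which flip patterns $(s_1,s_2)\mapsto(s_1^\prime,s_2^\prime)$ are compatible with the coupling constraints $s_1+s_2=s$ and $s_1^\prime+s_2^\prime=s^\prime$, and then evaluate the product of the two $SU_2$ coefficients in \eqref{eq:cgcrel}. This selection-rule analysis is precisely what generates the thicket of Kronecker deltas, the $(-1)^S$ signs, and the $(1-\delta_{S^\prime S})$ factor in \eqref{eq:Delta123}. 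I expect this step, rather than the $t$-integration or the $\alpha$-expansion, to be where the real effort lies; once it is carried through, the rest is identical in spirit to the proof of proposition~\ref{prop:RcDiagAlphaSmall}.
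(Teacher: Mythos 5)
Your proposal is correct and follows essentially the same route as the paper: the paper's proof likewise writes out \eqref{eq:HresProj-c-2} with the kernel \eqref{eq:kerh-2}, performs the Clebsch--Gordan bookkeeping to obtain an intermediate formula organized exactly by your single-flip ($\alpha b_{\pm\img t}/t$ times $a_{\pm\img t}\mp\beta b_{\pm\img t}$) and double-flip ($\alpha^2 b_{\pm\img t}^2/t^2$) terms, then substitutes the expansions \eqref{eq:atbt-small} and applies \eqref{eq:tm3} with $n\in\{1,2\}$. Your observations about why the expansion starts at order $\alpha$ (CG orthogonality killing the diagonal-diagonal products off the diagonal) and about the hidden $t^{-1}$ in the $\sin(\beta t)/(\beta t)$ correction promoting $K_3$ to $K_4$ are both accurate and consistent with the paper's computation.
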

\begin{proof}
We have by \eqref{eq:HresProj-c-2}
\begin{align}
R_{\sigma^\prime\sigma}^c(z)(x,X)=&
\int_0^\infty\e^{\pm\img t z}
K_{\pm\img t}^0(x)K_{\pm\img t}^0(X) 
\nonumber \\
&\cdot\Biggl(
\frac{\alpha}{2\sqrt{2}t}b_{\pm\img t}
\{(a_{\pm\img t}-\beta b_{\pm\img t})
(\delta_{s^\prime1}\delta_{S^\prime1}\delta_{s0}
[X^--(-1)^Sx^-] 
\nonumber \\
&-\delta_{s^\prime0}\delta_{s1}\delta_{S1}
[X^+-(-1)^{S^\prime}x^+])
-(a_{\pm\img t}+\beta b_{\pm\img t}) 
\nonumber \\
&\cdot(\delta_{s^\prime0}\delta_{s,-1}\delta_{S1}
[(-1)^{S^\prime}X^--x^-]
-\delta_{s^\prime,-1}\delta_{S^\prime1}\delta_{s0}
[(-1)^SX^+-x^+])\} 
\nonumber \\
&\mp\frac{\img\alpha^2}{8t^2}b_{\pm\img t}^2
\{2\delta_{S^\prime S}\delta_{S1}
[\delta_{s^\prime1}\delta_{s,-1}x^-X^-
+\delta_{s^\prime,-1}\delta_{s1}x^+X^+] 
\nonumber \\
&+(-1)^S\delta_{s^\prime s}\delta_{s0}
(1-\delta_{S^\prime S})[x^-X^+-x^+X^-]\}
\Biggr)\dd t
\label{eq:opapa}
\end{align}
for $\sigma^\prime\neq\sigma$. Using $a_{-\img t}=\ol{a_{\img t}}$ 
(and similarly for $b_{-\img t}$) and the exponentiation
of sine and cosine functions, substitute \eqref{eq:atbt-small} 
in \eqref{eq:opapa}, and then apply \eqref{eq:tm3} for $n\in\{1,2\}$, 
and deduce \eqref{eq:RczalphaSmallNonD} and \eqref{eq:Delta123}.
\end{proof}

\end{document}